\PassOptionsToPackage{verbose=silent,expansion=true}{microtype}
\newcommand{\submissionmode}{1}

\documentclass[sigconf,nonacm]{acmart}
\usepackage{popets}
\graphicspath{{./}{Sections/Figure/}}

\setcopyright{none}
\copyrightyear{2027}

\acmYear{2027}
\acmVolume{2027}
\acmNumber{X}
\acmDOI{XXXXXXX.XXXXXXX}
\acmISBN{}
\acmConference{Proceedings on Privacy Enhancing Technologies}
\usepackage{lipsum}
\usepackage{pgfplots}
\usepackage{amsfonts}
\usepackage{amssymb}
\usepackage{amsmath}
\usepackage{tikz}
\usepackage{graphicx}
\usepackage{mathtools}
\usepackage{amsthm}
\usepackage{bbm}
\usepackage{tabularx}
\usepackage{booktabs} 
\usepackage{blindtext}
\usepackage{algorithm}
\usepackage{algpseudocode}
\usepackage[utf8]{inputenc}
\usepackage{csquotes}
\usepackage{subcaption}
\usepackage{float}
\usepackage{pgfplots}
\usepackage{hyperref}
\usepackage{multirow}
\usepackage{makecell}
\usepackage{enumitem}
\usepackage{url}
\usepackage{arydshln}

\usepackage{pgfplots}
\pgfplotsset{compat=1.18}
\usepackage{pgfplotstable}     
\usepackage{xcolor}
\usetikzlibrary{patterns, positioning}      
\usetikzlibrary{calc,fit,arrows.meta}
\usetikzlibrary{backgrounds}
\usetikzlibrary{fadings}
\usetikzlibrary{shadows.blur}
\tikzfading[name=softwhite,
  inner color=transparent!0,
  outer color=transparent!100
]
\usepgfplotslibrary{groupplots}

\definecolor{myorange}{HTML}{F28E2B}
\definecolor{mygreen}{HTML}{59A14F}
\definecolor{myblue}{HTML}{4E79A7}

\ifcase\submissionmode
\or 
\or
\fi

\pgfplotsset{compat=1.18}

\usetikzlibrary{calc}
\usetikzlibrary{arrows.meta}
\usetikzlibrary{matrix}

\newcommand{\bb}[1]{\mathbb{#1}}  
\newcommand{\cc}[1]{\mathcal{#1}} 
\newcommand{\round}[1]{\left\lceil #1 \right\rfloor} 

\newcommand{\ct}{\mathsf{ct}}
\newcommand{\sk}{\mathsf{sk}}
\newcommand{\pk}{\mathsf{pk}}

\newcommand{\dec}{\mathsf{Dec}}
\newcommand{\enc}{\mathsf{Enc}}

\ifcase\submissionmode
    \newcommand{\dhnote}[1]{{}\textcolor{red}{\textsf{[DH Comment: {#1}]}}}
    \newcommand{\hnnote}[1]{{}\textcolor{blue}{\textsf{[HN Comment: {#1}]}}}
    \newcommand{\jhnote}[1]{{}\textcolor{teal}{\textsf{[JH Comment: {#1}]}}}
\or
    \newcommand{\dhnote}[1]{} 
    \newcommand{\hnnote}[1]{}
    \newcommand{\jhnote}[1]{}
\or
    \newcommand{\dhnote}[1]{} 
    \newcommand{\hnnote}[1]{}
    \newcommand{\jhnote}[1]{}
\fi

\usepackage{amsthm}

\theoremstyle{plain}
\newtheorem{theorem}{Theorem}[section]

\newtheorem{lemma}[theorem]{Lemma}
\newtheorem{corollary}[theorem]{Corollary}
\theoremstyle{definition}
\newtheorem{definition}[theorem]{Definition}

\theoremstyle{remark}
\newtheorem{remark}[theorem]{Remark}

\usepackage[disable,textsize=tiny]{todonotes}
\begin{document}

\title[Private Embedding Lookup with Encrypted Compact Queries under FHE]{%
Private Embedding Lookup with Encrypted Compact Queries under Fully Homomorphic Encryption}

\author{Jung Hee Cheon}
\authornote{Authors are listed in alphabetical order.}
\affiliation{%
  \institution{Department of Mathematical Sciences, Seoul National University}
  \city{Seoul}
  \country{Republic of Korea}}
\affiliation{%
  \institution{CryptoLab Inc.}
  \city{Seoul}
  \country{Republic of Korea}}
\email{jhcheon@snu.ac.kr}

\author{Daehyun Jang}
\affiliation{%
  \institution{Department of Mathematical Sciences, Seoul National University}
  \city{Seoul}
  \country{Republic of Korea}}
\email{jadh0309@snu.ac.kr}

\author{Jaehee Kang}
\affiliation{%
  \institution{Department of Mathematical Sciences, Seoul National University}
  \city{Seoul}
  \country{Republic of Korea}}
\email{jaehee.kang422@snu.ac.kr}

\author{Hanee Rhee}
\affiliation{%
  \institution{Department of Mathematical Sciences, Seoul National University}
  \city{Seoul}
  \country{Republic of Korea}}
\email{hnrhee11@snu.ac.kr}

\renewcommand{\shortauthors}{Cheon, Jang, Kang, and Rhee}

\begin{abstract}
Many Natural Language Processing (NLP) and recommendation models begin by mapping discrete client inputs to embedding vectors.
Because these inputs can reveal sensitive information, this embedding step need to be protected in privacy-preserving inference.
Fully Homomorphic Encryption (FHE) enables inference over encrypted client data, but under FHE the embedding lookup is no longer a simple table access and has to be realized through homomorphic computation.
To keep the embedding table on the server and avoid transmitting encrypted embedding vectors from the client, we focus on the server-side lookup: the client sends only a small encrypted index.
Prior work from ICML 2024~\cite{KPLC24} first builds a one-hot vector from the encrypted index before multiplying it with the embedding table, and this one-hot generation is the dominant cost.
However, one-hot-based methods are expensive in FHE because they construct a $p$-dimensional selection vector by performing
an equality test for each coordinate, requiring $O(p\log p )$ homomorphic operations in total.

Our key observation is that private embedding lookup only requires a linearly independent representation of the encrypted index, not the one-hot basis itself.
Building on this observation, we propose Independent Vector Evaluation (IVE).
Instead of constructing a one-hot vector, IVE evaluates a linearly independent
vector built from successive powers of a single encrypted value, reducing the
vector-generation cost to $O(p)$ homomorphic operations.
It then recovers the same embedding vector through a precomputed change of
basis, which we instantiate with an orthogonal Discrete Cosine Transform (DCT)
to mitigate error amplification.

Our implementation demonstrates that this design improves amortized lookup time
by up to $78.4\times$ over the prior method.
We further evaluate its impact through end-to-end encrypted FastText inference, where embedding lookup is a major cost in the shallow model.
Concretely, on the Enron-Spam email-spam dataset, replacing one-hot generation with IVE reduces the share of vector generation in encrypted inference time from $99.6\%$ to $66.3\%$.

\end{abstract}

\ccsdesc[500]{Security and privacy~Privacy-preserving protocols}

\keywords{Private Embedding Lookup, Fully Homomorphic Encryption, CKKS, Privacy-Preserving Machine Learning, Encrypted Inference, Embedding Layers}

\maketitle

\section{Introduction}

Modern machine learning (ML) services increasingly run through an outsourcing model:
a client sends an input to a server, and the server evaluates a model on that input.
This model is natural in ML-as-a-Service (MLaaS), where clients such as smartphones,
wearable devices, or small applications may not have the memory or compute resources
to run the full model locally.
This delegated computation creates an immediate privacy problem.
The input sent to the server can reveal medical symptoms, financial behavior,
search intent, or the exact words typed by the user.
Thus, the client may want to use the model while keeping the input hidden from
the server that evaluates it.
Privacy-preserving machine learning (PPML) addresses this problem by allowing the
server to compute on the client's data without directly observing it.

Fully Homomorphic Encryption (FHE) is a natural tool for PPML because it supports
arithmetic computation directly on ciphertexts~\cite{Gen09,FV,BGV,CGGI,DM,CKKS,BRA}.
Compared with MPC or OT-based protocols, FHE keeps the online interaction simple:
the client encrypts its input, the server evaluates the circuit, and the client
or a later component decrypts only the intended result.
This is attractive in the MLaaS setting because the client does not need to
participate in every layer of the computation after sending the encrypted input.
Compared with differential privacy, FHE targets cryptographic confidentiality of
the input rather than hiding an individual's contribution through noise.
Among various homomorphic encryption schemes, the CKKS scheme~\cite{CKKS} is widely used for PPML because it efficiently supports approximate computation over complex numbers.

Recent progress in such FHE schemes, together with
FHE algorithms for neural-network inference~\cite{pmlr-v48-gilad-bachrach16,
dathathri2019chet,dathathri2020eva,chen2022thex,THOR,EFLLM}, has made encrypted
inference a realistic target for a growing range of ML workloads.
For arithmetic layers, this progress is especially visible: homomorphic linear
layers can use algorithmic reductions to plaintext matrix
multiplication~\cite{PCMM,CCMM,BatchMM}, while nonlinear functions such as
activation functions are increasingly optimized through polynomial
approximations~\cite{SOFTMAX}.
This paper focuses on a different bottleneck in such pipelines: privately
performing the embedding lookup for discrete user inputs.

For models whose inputs include discrete tokens or indices, inference begins with
an embedding layer.
Given an index $j\in[P]$, the embedding layer retrieves the vector
$M_j\in\mathbb{R}^d$, the column indexed by $j$ of an embedding table
$M\in\mathbb{R}^{d\times P}$, and passes this vector to the remaining layers.
Equivalently, if ${\bf e}_j$ is the one-hot vector for index $j$, then the
lookup can be written as $M_j=M{\bf e}_j$.
The embedding layer therefore converts a discrete index into the continuous
vector representation expected by the rest of the model.
This step is standard in word embeddings~\cite{Mikolov2013Efficient,Glove},
transformers~\cite{Vaswani2017Transformer,Devlin2019BERT}, and recommendation
models~\cite{Naumov2019DLRM}.
The privacy of the index itself is important.
A token index is not an arbitrary number: it identifies a word, subword, item, or
query chosen by the user.
Revealing whether a user typed a medical term, a financial term, or a politically
sensitive term may already reveal the sensitive part of the interaction.
Thus, for FHE-based inference over discrete inputs, the lookup from the index to
the embedding vector must also be handled without revealing the index.

There are two direct ways to handle this step.
The first way is client-side embedding: the client obtains the embedding table,
performs the lookup locally, encrypts $M_j$, and sends the encrypted vector to
the server.
This avoids encrypted lookup on the server, but it requires the server to
distribute the embedding table to the client and the client to send an encrypted
$d$-dimensional vector per token.
For GloVe.42B.300d, the table has $1.9\mathrm{M}$ vectors of dimension $300$, which
is about $2.3$GB in single precision before any encryption is considered.
Thus, the client must receive and store a large table.
If the table is proprietary, distributing it also exposes model parameters to the
client.
Even when the table is public, storage, redistribution, and update costs remain
substantial.
The encrypted vector sent per token also adds substantial client-to-server communication, since the transmitted ciphertexts scale with the embedding dimension $d$. This cost becomes more pronounced for multi-token inputs or large embedding dimensions. These requirements are unattractive for lightweight clients.

The other way is server-side embedding: the client sends an encrypted query
derived from the index $j$, and the server performs the lookup on ciphertexts.
This keeps the table on the server and avoids sending the embedding vector from
the client, but it shifts the table-size dependent lookup cost to the server.
We call this server-side task \emph{Private Embedding Lookup (PEL)}: the client sends
an encrypted query derived from a private index, and the server produces
ciphertexts encoding the corresponding embedding vector in a form suitable for
the following homomorphic layers.

The main difficulty in server-side embedding is that private lookup no longer
behaves like ordinary memory access.
In plaintext, the lookup is just a memory access.
Under FHE, however, the server cannot branch on an encrypted index or use it as a
memory address.
The lookup must be replaced by arithmetic operations over ciphertexts, and the
cost grows with the table size $P$.
For example, lookup-table methods may evaluate an interpolating
polynomial~\cite{LUT2}, while embedding approaches often reduce the lookup to a
matrix-vector multiplication.
As a result, a step that is essentially a memory access in plaintext becomes a
table-size-dependent ciphertext computation under FHE.
This change is especially costly for embedding layers because their tables are often large: vocabulary sizes such as BERT's roughly $30\mathrm{K}$ tokens, GPT-2's roughly $50\mathrm{K}$ tokens, or GloVe.42B.300d's $1.9\mathrm{M}$ tokens. As a result, the embedding lookup stage can already become a dominant component of the encrypted inference cost, even before the model reaches the usual linear and nonlinear layers.

Efficient PEL is therefore needed to make server-side embedding a practical
alternative to client-side embedding: the goal is to keep the embedding table on
the server and avoid client-to-server transmission of encrypted embedding
vectors, without turning the private lookup itself into the dominant cost.
We focus on this server-side PEL setting in the rest of this section.
The remaining question is how existing protocols handle the resulting private
lookup problem.

\subsection{Prior Work Comparison}
\label{subsec:intro-design-space}

We compare how existing protocols handle this private lookup problem.
For efficiency, the works considered in~\cite{KPLC24,HE-LRM} compress the table
before private lookup is applied.\label{subsubsec:table-compression}
Instead of using one large table $M\in\mathbb{R}^{d\times P}$, table compression
represents it by $\ell$ smaller subtables
$M^{(1)},\ldots,M^{(\ell)}\in\mathbb{R}^{d\times p}$.
An original index $j\in[P]$ is mapped to a tuple of sub-indices
$(j_1,\ldots,j_\ell)\in[p]^\ell$, and the embedding vector is reconstructed by
retrieving one vector from each subtable and summing them:
\[
    M_j \approx \sum_{t=1}^{\ell} M^{(t)}_{j_t}.
\]
Thus a lookup over a table of size $P$ is replaced by $\ell$ lookups over tables
of size $p$.
We treat this compression step as a front-end choice: once it produces subtables
of size $p$, our focus is the cryptographic cost of each private subtable
lookup.

\begin{table*}[t!]
\centering
\scriptsize
\setlength{\tabcolsep}{3pt}
\renewcommand{\arraystretch}{1.16}
\caption{Design-space comparison for private embedding lookup. For the FHE-only
rows, table compression gives $\ell$ private lookups over subtables of size $p$,
followed by summation of the resulting vectors. FastQuery is included as an
adjacent 2PC+HE comparison point rather than as a protocol normalized under this
FHE-only compressed-table metric. Query size counts the plaintext query
representation before encryption and packing. The last row shows the design
point pursued in this work.}
\label{tab:intro-related-comparison}
\begin{tabularx}{0.98\textwidth}{@{}>{\centering\arraybackslash}p{0.09\textwidth}>{\centering\arraybackslash}p{0.07\textwidth}>{\centering\arraybackslash}p{0.18\textwidth}>{\centering\arraybackslash}p{0.115\textwidth}>{\centering\arraybackslash}p{0.115\textwidth}>{\centering\arraybackslash}p{0.135\textwidth}>{\raggedright\arraybackslash}X@{}}
\toprule
\multirow{2}{*}{Work} & \multirow{2}{*}{Setting/output}
& \multicolumn{2}{c}{Client query}
& \multicolumn{2}{c}{Server vector recovery}
& \multirow{2}{*}{Takeaway} \\
\cmidrule(lr){3-4}\cmidrule(lr){5-6}
& & Representation & Size & One-hot vector on server? & Cost before table multiplication & \\
\midrule
FastQuery~\cite{FastQuery}
& 2PC+HE
& Masked one-hot-vector query with offline randomness; not based on the
table-compression front-end.
& Not directly comparable
& No
& Not an FHE-only encrypted-output pipeline
& Adjacent setting: output is share-based, so the FHE-only encrypted-output
metric below does not directly apply. \\
\midrule
HE-LRM~\cite{HE-LRM}
& FHE
& Encrypted one-hot vectors $\mathsf{Enc}({\bf e}_{j_t})$.
& \textbf{Large}: $\Theta(\ell p)$ vector entries
& No; client provides vectors
& None before table multiplication
& Saves server recovery work, but pays for it with large client communication. \\
\midrule
Kim et al.~\cite{KPLC24}
& FHE
& Encrypted scalar sub-indices $\mathsf{Enc}(j_t)$.
& \textbf{Compact}: $\Theta(\ell)$ scalars
& Yes
& \textbf{High}: $\Theta(\ell p\log p)$ operations
& Communication is compact, but homomorphic one-hot generation becomes the bottleneck. \\
\midrule
\textsc{IVE-PEL} (this work)
& FHE
& Encrypted scalar sub-indices $\mathsf{Enc}(j_t)$.
& \textbf{Compact}: $\Theta(\ell)$ scalars
& No
& \textbf{Lower}: $\Theta(\ell p)$ operations
& Hits the target point: compact query and no server-side one-hot generation. \\
\bottomrule
\end{tabularx}
\end{table*}


\paragraph{Communication Cost vs. Computation Cost.}
These protocols expose the tradeoff we aim to resolve: keeping the compact
communication of scalar-index queries without making the server homomorphically
generate a one-hot vector.
Existing works~\cite{FastQuery,HE-LRM,KPLC24} take different approaches to this private lookup problem.
FastQuery~\cite{FastQuery} targets private embedding lookup, but in a 2PC+HE protocol with a masked one-hot online query and additive shares setting, making it somewhat less aligned with the FHE-based inference setting that we consider.

For the FHE-only line of work, the important distinction is the tradeoff between client communication and server-side one-hot computation.
HE-LRM~\cite{HE-LRM} lets the client provide an encrypted one-hot vector $e_j\in \{0,1\}^p$.
Once the server has this vector, the lookup itself is a homomorphic linear operation with the table, $M_j=Me_j\in \bb R^d$.
This keeps the server-side lookup simple, but the client must communicate a length-$p$ encrypted vector for each private lookup, which leads to substantial blowup if $P$ is large.

The method of Kim et al. (ICML 2024)~\cite{KPLC24} instead allows the client to send only an encrypted scalar index $j\in [p]$.
This keeps the query compact.
The server, however, must first transform the encrypted index $j$ into the corresponding one-hot vector $e_j$ before applying the same linear operation.
Kim et al. introduced a function called the Encrypted Indicator Function (EIF) to perform this operation, but it is extremely costly, even compared to the subsequent multiplication with the table.

To sum up, if the client sends the one-hot vector, the server-side computation is simple but the communication is large.
If the client sends only the scalar index, the communication is small but the server additionally computes the one-hot vector homomorphically.
This one-hot generation step is homomorphically expensive and becomes the main server-side bottleneck in the lookup regimes we target.
Table~\ref{tab:intro-related-comparison} summarizes this communication-versus-one-hot computation tradeoff for the FHE-only approaches.
It lists FastQuery separately as an adjacent 2PC+HE comparison point.

\subsection{Our Method}
\label{subsec:intro-method}

Our goal is the missing design point identified above: keep the communication benefit of sending encrypted scalar sub-indices while avoiding expensive server-side one-hot generation.
In other words, we aim to retain a compact query representation while achieving the effect of one-hot generation with a computational cost that is not dominant compared to the subsequent table multiplication.

We achieve this by avoiding one-hot generation altogether.
The server does not actually need the one-hot vector itself.
It only needs to produce the same selected embedding vector.
This suggests an algebraic alternative.
The server only needs an encrypted representation that uniquely identifies the
index and can be converted, by a plaintext precomputed transformation, into the
same selected embedding.
Instead of representing the index $j$ by the one-hot vector ${\bf e}_j$, we
associate it with another vector ${\bf v}_j$.
As long as the vectors $\{{\bf v}_j\}$ are linearly independent, there exists a
change-of-basis matrix $L$ satisfying $L{\bf v}_j={\bf e}_j$.
Then the lookup result is unchanged:
\[
    M{\bf e}_j = ML{\bf v}_j.
\]
The matrix $ML$ is plaintext and can be precomputed by the server.
Therefore, the online task is no longer constructing the one-hot vector but
becomes evaluating a suitable independent vector ${\bf v}_j$.

This changes the design space.
One-hot vectors are only one possible basis, and they are not friendly to
homomorphic evaluation from an encrypted index.
Once we allow any linearly independent vector family, we can choose vectors that
are much easier to generate homomorphically and still recover the same embedding
after the precomputed linear transformation.
Our method follows this principle through Independent Vector Evaluation (IVE).
The challenge is not merely to find another basis, but to choose one that is
cheap to generate under FHE and numerically stable under CKKS approximation.
This imposes two additional
requirements: efficient homomorphic generation and numerical stability.

\paragraph{Efficiency Requirement.}
The vector family should be easy to generate under FHE.
This suggests vectors generated from powers of one value, such as
\[
    {\bf v}(x)=(1,x,x^2,\ldots,x^{p-1})^\top.
\]
Once the server has an encryption of $x$, it can generate powers of $x$ using
ordinary arithmetic operations.
This is much more arithmetic-friendly than testing, coordinate by coordinate,
which position equals the hidden index.

\paragraph{Stability Requirement.}
Efficiency alone is not enough.
A poorly conditioned change of basis can amplify
the error in the evaluated vector.
We therefore want the matrix $L$ formed by the vectors ${\bf v}_j$ to be orthogonal,
or at least very well conditioned.
Together with the power-vector form, this points to roots of unity: in the
complex setting, powers of roots of unity give the Fourier basis, which is
unitary. Thus the guiding principle is to use powers for efficient homomorphic generation and roots of unity for numerical stability.

\paragraph{Real-valued Version.}
Further, since embedding tables are real-valued, so we use a real-valued version of this idea.
We take a shifted root-of-unity construction and separate the real and imaginary
parts of the resulting powers.
This realification yields
the Discrete Cosine Transform (DCT) matrix used in our protocol.
The DCT choice keeps the change of basis orthogonal while letting the server generate the vector entries from powers of a single encrypted value.
This gives $O(p)$ homomorphic vector generation for a table of size $p$.


\subsection{Our Contributions}
\label{subsec:intro-contributions}

This work studies Private Embedding Lookup (PEL) for FHE-based inference over
discrete inputs for the feasibility for lightweight client situation.
We aim to design PEL to support server-side lookup with compact client queries while keeping the homomorphic computation small enough for end-to-end inference.
Our contributions are as follows.
\begin{itemize}
 \item \textbf{Independent Vector Evaluation.}
    We propose Independent Vector Evaluation (IVE), which replaces encrypted
    one-hot generation with the evaluation of an arithmetic-friendly linearly
    independent vector ${\bf v}_j$.
    The key observation is that one-hot vectors are only one possible basis:
    if $L{\bf v}_j={\bf e}_j$, then the server can precompute $ML$ and recover
    the same embedding as $ML{\bf v}_j=M_j$.
    We instantiate this idea using a DCT-based orthogonal construction, which
    supports power-based homomorphic generation while controlling numerical
    error amplification in CKKS.
    For a size-$p$ lookup, IVE reduces vector generation from $O(p\log p)$ to
    $O(p)$ homomorphic operations.

    \item \textbf{An Efficient PEL Pipeline via IVE and PCMM.}
    We integrate IVE with plaintext-ciphertext matrix multiplication
    (PCMM)~\cite{PCMM} to build an efficient PEL pipeline for multiple token
    queries.
    IVE reduces the cost of encrypted vector generation, while PCMM reduces the
    linear cost of applying the embedding table to many encrypted query vectors
    in parallel.
    Under the same CKKS parameters and lookup size, our implementation shows
    that the proposed IVE-based PEL pipeline improves the amortized lookup time
    by $18.5$--$78.4\times$ over the one-hot-based method of Kim et al. at
    $\log p=10$, as reported in Table~\ref{tab:compare}.

    \item \textbf{End-to-end Encrypted FastText Inference.}
    We implement an end-to-end encrypted inference pipeline for a FastText-style text classification model on privacy-sensitive text datasets.
    The pipeline uses our PEL method as the encrypted embedding layer and then performs the remaining linear classification homomorphically.
    Compared with replacing the embedding layer by the one-hot-based lookup method of Kim et al., our method substantially reduces the amortized end-to-end inference time per input example.
    On the spam mail classification dataset (Enron-Spam), IVE-PEL reduces the share of vector generation in the encrypted inference time from $99.6\%$ under Kim et al.'s method to $66.3\%$, as reported in Table~\ref{tab:fasttext}.   
    These results show that our faster lookup method also improves the overall encrypted inference time.
\end{itemize}



\subsection{Related Work}
\label{subsec:intro-related}

The closest prior systems for private embedding lookup are HE-LRM~\cite{HE-LRM},
the method of Kim et al.~\cite{KPLC24}, and FastQuery~\cite{FastQuery}; these
are compared in Section~\ref{subsec:intro-design-space}.
We use this subsection only to clarify how PEL differs from adjacent primitives
such as PIR and Homomorphic Lookup.

\paragraph{Relation to PIR}
Private Information Retrieval (PIR)~\cite{firstPIR} also hides an index while
retrieving a database entry.
At a high level, this sounds very close to PEL.
The client has a private index, the server has a table, and the protocol should
retrieve the selected entry without revealing the index.
The closest PIR analogue is single-server, single-round, linear PIR, where the
client sends one encrypted query and the server combines table entries according
to that query.
The difference is the role of the output.
In PIR, the selected record is usually returned to the client.
The client can then decrypt it and may use preprocessing or postprocessing to
recover the final record; many PIR communication/computation trade-offs are built
around this final-client-output model~\cite{TradeoffsinPIR}.
In PEL, the embedding is not the final output of the application.
It is an intermediate ciphertext inside an FHE inference pipeline.
The next homomorphic layer must be able to consume it directly.
If the client had to decrypt, reconstruct, and re-encrypt the embedding, the FHE
pipeline would be interrupted.
Thus PEL needs a lookup protocol whose encrypted output is already in the right
form for the following homomorphic computation.

\paragraph{Relation to Homomorphic Lookup}
Homomorphic lookup algorithms~\cite{LUT1,LUT2,TFHELUT} evaluate
functions of encrypted inputs and are closely related to scalar-input lookup.
They are relevant because an embedding lookup can also be viewed as a function
of an encrypted index.
However, such work usually focuses on evaluating a function inside a
homomorphic circuit.
Client-to-server communication is not usually the main design choice in that
cost model.

PEL differs in two ways.
First, the query representation is part of the problem.
The client might send a one-hot query, a scalar-index query, or another compact
encoding, and this choice directly changes both communication and server-side
work.
Second, the output is vector-valued and must be encoded in a CKKS format suitable
for downstream linear layers.
Standard homomorphic lookup analyses do not directly capture this combination of
query communication, vector output, CKKS encoding, and compatibility with the
rest of the FHE inference pipeline.

\section{Preliminaries}
In this section, we present the notation and review the main tools used in our protocol.

For a positive integer $p$, we denote $[p]:=\{0,1,\ldots,p-1\}$.
Throughout, $\log$ denotes $\log_2$.
We use bold lower-case letters (e.g., $\mathbf{v}$) for vectors and upper-case letters (e.g., $M$) for matrices.
We write $\sqrt{-1}$ for the imaginary unit (reserving $i$ for indexing).
For a vector $\mathbf{v}$, we write $\mathbf{v}[i]$ for its $i$-th component.
For a matrix $M$, we write $M_j$ for its $j$-th column.

\subsection{Fully Homomorphic Encryption}
Fully Homomorphic Encryption (FHE) enables arithmetic directly over encrypted data~\cite{Gen09}.
We recall the basic definition of an FHE scheme.
A Fully Homomorphic Encryption scheme $\Pi_{\sf HE}$ is a tuple of probabilistic polynomial time (PPT) algorithms
\[
\Pi_{\sf HE}=({\sf KeyGen}, {\sf Enc}, {\sf Dec}, {\sf Eval}).
\]
Here, ${\cc {SK}}$ and ${\cc {PK}}$ denote secret key space and public key space, respectively.
We let $\pk$ include the public encryption key and the evaluation keys needed for homomorphic operations.
Also, ${\cc {M}}$ and ${\cc {C}}$ denote message space and ciphertext space, respectively.
\begin{itemize}
  \item ${\sf KeyGen}(1^\lambda)\to (\sk,\pk)$: Given security parameter $\lambda$, outputs a secret key $\sk\in{\cc{SK}}$ and public key $\pk\in {\cc {PK}}$.
  \item ${\enc}(\pk,m)\to \ct$: Given message $m\in \cc M$ and public key $\pk$, outputs a ciphertext $\ct\in \cc C$.
  \item ${\dec}(\sk, {\sf ct})\to m$: Given ciphertext $\ct \in \cc C$ and secret key $\sk$, outputs a message $m\in \cc M$.
  \item ${\sf Eval}(\pk, (\ct_i)_{i\in[n_1]};C)\to (\ct'_j)_{j\in[n_2]}$: Given public key $\pk$, an $n_1$-tuple of ciphertexts, and an $n_1$-input-$n_2$-output circuit $C$, outputs an $n_2$-tuple of ciphertexts.
\end{itemize}

\subsection{CKKS Scheme}
In this work, we use the CKKS scheme~\cite{CKKS}, which supports approximate (fixed-point) arithmetic over complex numbers.
We introduce the basic concepts required for our construction; additional details are in Appendix~\ref{Appendix:CKKS}.

\paragraph{CKKS Message and Ciphertext.}
Let $N$ be a power-of-two. We define $\cc R_N:=\mathbb{Z}[X]/(X^N+1)$.
For integers $q_0,\dots,q_L$,    
set $Q_\ell := \prod_{i=0}^{\ell} q_i$ for $\ell=0,\dots,L$.
Let $\cc R_{N,Q_{\ell}}:=\mathbb Z_{Q_{\ell}}[X]/(X^N+1)$.
We call $N$ the \textit{ring degree}, $\ell$ the \emph{level},  and $Q_{\ell}$ the ($\ell$-level) \textit{modulus}.
Then, the CKKS \emph{message space} $\cc M$ is $\bb C^{N/2}$.
The ($\ell$-level) CKKS \emph{ciphertext space} $\cc C_{\ell}$ is $\cc R_{N,Q_{\ell}}^2=\{(a,b):a,b\in \cc R_{N,Q_{\ell}}\}$.

\paragraph{CKKS Encryption.}
CKKS encryption $\enc$ proceeds as follows.
A complex vector ${\bf m}\in \bb C^{N/2}$ is first scaled by a \emph{scale factor} $\Delta$ and encoded (via an FFT-based map) into a polynomial $\mu \in \cc R_N$.
(Using a larger scale factor $\Delta$ generally yields higher-precision homomorphic computations.)
The encryption algorithm (to level $\ell$) outputs a ciphertext $(a,b)\in \cc C_\ell$ such that
$as+b=\mu+e_{\sf enc}\pmod{Q_\ell}$ for a secret $s={\sf sk}\in \cc R_N$ and small encryption error $e_{\sf enc}$.
Accordingly, decryption $\sf Dec$ takes a level-$\ell$ ciphertext $(a,b)$, computes $as+b \pmod{Q_\ell}$ with respect to the secret key $s$, and then decodes the result to recover a vector in $\bb C^{N/2}$.

\paragraph{CKKS Operations.}
For CKKS ciphertexts, homomorphic addition and multiplication are supported over $N/2$ complex values in a SIMD manner.
\begin{itemize}
\item {\sf Add}: Adds two ciphertexts or one ciphertext and a message, denoted as $\ct_1+\ct_2$ or $\ct+{m}$.
\item {\sf Mult}: Multiplies two ciphertexts or one ciphertext and a message, denoted as $\ct_1\cdot \ct_2$ or ${m}\cdot\ct$.
\item {\sf Conj}: Conjugates a ciphertext $\ct$ homomorphically, denoted as $\overline {\sf ct}$.
\end{itemize}

The cost of the CKKS operation depends on the ciphertext level: Operations on higher-level ciphertexts incur greater cost, approximately proportional to the level.
Each ciphertext-ciphertext multiplication, and typically each multiplication by an encoded non-integer plaintext, consumes one level of a CKKS ciphertext\footnote{As an exception, multiplying a ciphertext by an integer (including Gaussian integers in $\bb Z[\sqrt{-1}]$) does not consume a level.}. 
Once all levels are exhausted, further homomorphic multiplications are no longer supported. The total number of levels consumed by a homomorphic circuit is called its (multiplicative) \emph{depth}.
CKKS bootstrapping $\sf BTS$ refreshes a ciphertext by restoring its level while approximately preserving the underlying message~\cite{CKKSBTS}. 


Matrix multiplication is widely used in practice, and extensive prior work has studied how to perform it efficiently under FHE, including both encoding strategies and optimized multiplication procedures. Among these, the multiplication of a plaintext matrix with a ciphertext matrix is commonly referred to as $\sf PCMM$. A recent advance enables this operation to be implemented very efficiently~\cite{PCMM}.

\begin{itemize}
    \item ${\sf PCMM}_A({\sf CT})$: For a plaintext matrix $A\in\bb C^{r\times s}$ and ciphertext(s) ${\sf CT}$ encoding a matrix $B\in\bb C^{s\times t}$ under a prescribed layout,
    outputs ciphertext(s) encoding $AB$ under the corresponding output layout.
\end{itemize}

\section{PEL: Private Embedding Lookup}\label{sec:pel}
In this section, we formalize Private Embedding Lookup (PEL) and discuss its properties.

Let $\Pi_{\sf HE}=({\sf KeyGen},{\sf Enc},{\sf Dec},{\sf Eval})$ be a homomorphic encryption scheme with message space $\cc M$ and ciphertext space $\cc C$.
For positive integers $r,d$, let ${\sf VecEcd}: \bb R^d\to \cc M^r$ be a vector encoding map with corresponding decoding map ${\sf VecDcd}$.
Let $\mathcal T\subset\bb R^{d\times p}$ be a finite-precision family of embedding tables, and write $M_j:=M_{:,j}\in\bb R^d$ for the $j$-th column of $M$.
Let ${E}:\bb R^d\times \bb R^d\to [0,\infty]$ be an error metric function and let $u$ be the inverse-error target, so that the target error is $1/u$.
Then we define Private Embedding Lookup as follows:

\begin{definition}[Private Embedding Lookup]\label{def:pel}
        A \textit{Private Embedding Lookup} protocol $\Pi_{\sf PEL}$ is a tuple of PPT algorithms
        \[
        \Pi_{\sf PEL}=({\sf Setup},{\sf Query}, {\sf Emb})
        \]
        parameterized by $(\Pi_{\sf HE},{\sf VecEcd},\mathcal T,E, u)$, between a client $\sf C$ that privately holds the HE secret key $\sk\in \cc{SK}$ and an index $j\in[p]$, and a server $\sf S$ that privately holds the embedding table $M\in \mathcal T$.
        \begin{itemize}
            \item ${\sf Setup}(1^{\lambda})\to (\sk, {\sf pp})$: Given security parameter $\lambda$, outputs public parameters ${\sf pp}$, including the HE public/evaluation keys, and the $\Pi_{\sf HE}$ secret key $\sk$.
            \item ${\sf Query}(j,{\sf pp})\to{\sf q}$: Given public parameters and $\sf C$'s private index $j\in[p]$, outputs a query ${\sf q}$.
            \item ${\sf Emb}(M,{\sf q},{\sf pp})\to (\ct_k)_{k\in [r]}\in \cc C^r$: Given query, public parameters, and $\sf S$'s private table $M\in \mathcal T$, outputs an $r$-tuple of server-held $\Pi_{\sf HE}$ ciphertexts encrypted under $\sf C$'s key.
        \end{itemize}

\end{definition}
\begin{definition}[Correctness]
    $\Pi_{\sf PEL}(\Pi_{\sf HE},{\sf VecEcd},\mathcal T,E, u)$ is said to be \emph{correct} if, for any $j\in [p]$ and $M\in \mathcal T$, the following probability over the randomness of ${\sf Setup}$, ${\sf Query}$, and ${\sf Emb}$ is overwhelming in $\lambda$:
        \[
        \Pr\left[
        E\Big( M_j,\ {\sf VecDcd}\big( ({\sf Dec}(\sk,\ct_k))_{k\in[r]} \big)\Big) < 1/u
        \right],
        \]
        where $(\sk,{\sf pp})\gets{\sf Setup}(1^\lambda)$, ${\sf q}\gets{\sf Query}(j,{\sf pp})$, and $({\ct_k})_{k\in[r]}\gets{\sf Emb}(M,{\sf q},{\sf pp})$.
        The decryption in this condition is used only to define the plaintext represented by the server-held output ciphertexts.

\end{definition}
\begin{definition}[Security]
PEL satisfies \emph{client index privacy} if for any two distinct indices $j_1,j_2\in[p]$ and probabilistic polynomial time algorithm $\cc A$, the following is negligible in $\lambda$ for $(\sk,{\sf pp})\gets {\sf Setup}(1^\lambda)$ and ${\sf q}_i\leftarrow{\sf Query}(j_i,{\sf pp})$: 
   \[
        \Big|\Pr\left[1\gets\cc A({\sf q}_1, {\sf pp})\right]-\Pr\left[1\gets\cc A({\sf q}_2, {\sf pp})\right]\Big|.
        \]

\end{definition}

\paragraph{Threat Model}
We assume semi-honest adversaries. PEL targets \emph{client index privacy}: $\sf S$ should learn nothing about $j\in[p]$ from the public parameters and the protocol transcript.
In our one-message formulation, the client-to-server transcript consists only of ${\sf q}$; after receiving ${\sf q}$, ${\sf Emb}$ is local server-side post-processing using $(M,{\sf q},{\sf pp})$.
Thus indistinguishability of the query implies indistinguishability of the full server-side transcript by post-processing.
Since PEL does not return the resulting ciphertext to the client, the \emph{server privacy} of $M\in \mathcal T$ is automatically satisfied. Rather, server privacy should be interpreted in the context of the entire FHE pipeline that uses PEL as a component, rather than as a standalone property of PEL.
Related discussions appear in~\cite{OntheSecurityandPrivacyofCKKSbased,cryptoeprint:2025/395}, and we leave this as an orthogonal direction. 

\paragraph{Performance Metrics.}
We evaluate a PEL protocol by
\begin{enumerate}
  \renewcommand{\labelenumi}{(\roman{enumi})}

    \setlength{\topsep}{0pt}
  \setlength{\partopsep}{0pt}
  \setlength{\itemsep}{0pt}
  \setlength{\parsep}{0pt}
  \setlength{\parskip}{0pt}
  \setlength{\leftmargin}{1.2em} 
  \setlength{\labelsep}{0.4em}
  
  \item the client-to-server communication cost, measured by the query size $|{\sf q}|$ or the communication overhead $|{\sf q}|/\log p$,
  \item the server-side cost of ${\sf Emb}$ (computation cost), and
  \item the achieved accuracy captured by the error bound $u$.
\end{enumerate}

\section{IVE: Independent Vector Evaluation}\label{IVE}

In this section, we present the Independent Vector Evaluation (IVE) algorithm, a core method for our efficient, batched-PEL solution under the CKKS scheme.
We use a tilde to denote an approximate value (e.g., $\tilde{x}$), and denote the perturbation by $\epsilon_x := \tilde{x}-x$.
For an $m$-dimensional error vector $\epsilon$, we use the root-mean-square (RMS) norm $\|\epsilon\|_{\sf RMS}=\|\epsilon\|_2/\sqrt{m}$.

We start from the approach of~\citet{KPLC24}, where the client encrypts the index as a single-slot value, from which the server homomorphically constructs a one-hot vector ${\bf e}_j\in \bb R^p$ and computes $M_j= M {\bf e}_j$. 
Similarly, we take as our baseline the standard approach in which the client encodes the index and sends an encryption of the encoded value to the server for communication efficiency.

However, constructing a one-hot vector from an encrypted index is not arithmetic-friendly.
In particular, the approach of Kim et al.\ uses numerical methods: It relies on $O(p\log p)$ multiplicative operations to drive each component of the vector to converge to either $0$ or $1$, consuming substantial multiplicative depth and potentially incurring poor latency under CKKS.
One of the key principles in FHE-based applications is to reformulate computations to be as arithmetic-friendly as possible, so as to minimize multiplicative depth and improve latency.
From this perspective, directly synthesizing a one-hot vector from the encrypted index is not an ideal strategy. 

To overcome this limitation, we propose the Independent Vector Evaluation (IVE) method.
In IVE, the one-hot vectors ${\bf e}_j$ are replaced with linearly independent vectors ${\bf v}_j$ that serve as inputs to a matrix multiplication layer.
Let $L \in \mathbb{R}^{p\times p}$ denote the basis-change matrix satisfying $L {\bf v}_j = {\bf e}_j$, so that
\(
M_j = M {\bf e}_j = (M L) {\bf v}_j,
\)
where $ML$ can be precomputed offline.

\subsection{Choice of Independent Vectors}
Regarding the choice of $\{{\bf v}_j\}_{j \in [p]}$, two aspects are particularly important in this method: \emph{Computational Efficiency} and \emph{Numerical Stability}.

First, for the computational efficiency, since it suffices that $\{{\bf v}_j\}_{j \in [p]}$ are linearly independent, any efficient arithmetic circuit satisfying this condition may be chosen to maximize efficiency.

Next, since $M$ and $L$ are fixed and can be precomputed in plaintext with high precision, the dominant source of error arises from the evaluation of ${\bf v}_j$ from the encrypted index $j$ and multiplication between $ML$ and ${\bf v}_j$.
After the server obtains a ciphertext that decrypts to $\tilde {\bf v}_j = {\bf v}_j + {\epsilon_{{\bf v}_j}}$ for some error vector $\epsilon_{{\bf v}_j}$ after {IVE}, it homomorphically multiplies by $ML$ to produce an encryption of $\tilde M_j$. The corresponding error $\|\epsilon_{M_j}\|_{2}$ is bounded as follows:
\begin{equation}\label{align:Errorbound}
    \|\epsilon_{M_j}\|_{2}
    =\Vert \tilde M_j - M_j \Vert_{{2}}
    \leq  \Vert  \tilde M_j - ML\tilde {\bf v}_j\Vert_{2} +\Vert ML\epsilon_{{\bf v}_j}\Vert_{2}.
\end{equation}
The first term $\Vert \tilde M_j - ML\tilde {\bf v}_j  \Vert_{{2}}$ is determined by the stability of the matrix multiplication algorithm, and our primary concern lies in the second term, $\|ML\epsilon_{{\bf v}_j}\|_{2}$. It depends primarily on the error $\epsilon_{\bf v}$ incurred during the generation of ${\bf v}_j$ and on the conditioning of $M$ and $L$. 
In particular, if the basis-change matrix $L$ is ill-conditioned, the error $\epsilon_{{\bf v}_j}$ can be substantially amplified. 

\paragraph{Discrete Cosine Transform Matrix.} 
To meet the above two requirements, we propose to use the Discrete Cosine Transform (DCT) matrix~\cite{DCT}, an orthogonal matrix that has been widely adopted in image and signal processing due to its \emph{efficient implementation} and \emph{stable numerical properties}.
Our choice is further motivated by its compatibility with CKKS: since CKKS supports complex arithmetic, DCT entries can be generated from complex exponentials.
This also aligns naturally with CKKS functional bootstrapping, which efficiently produces such exponentials from encrypted exponents.

In this work, we assume $p$ is even and define a (row-permuted) DCT matrix $D=(D_{k,j}) \in \mathbb{R}^{p \times p}$ as follows.
\begin{equation*}
D_{k,j} =
\begin{cases}
    \sqrt{\tfrac{2}{p}}\,
    \cos\!\left( \tfrac{(-1)^j(2j+1)\pi}{2p}\,(k+1) \right)
    & 0 \leq k < \tfrac{p}{2} \\[1.2ex]
    \sqrt{\tfrac{2}{p}}\,
    \sin\!\left( \tfrac{(-1)^j(2j+1)\pi}{2p}\,\bigl(k-\tfrac{p}{2}+1\bigr) \right)
    & \tfrac{p}{2} \leq k < p-1 \\[1.2ex]
    \tfrac{1}{\sqrt{p}}
    & k = p-1
\end{cases}
\end{equation*}
We defer the proof of orthogonality of $D$ to Appendix~\ref{Appendix:DeferredProofs}.

Now let $L = D^{-1}$. It follows that $\|L\|_2=1$ due to orthogonality, and ${\bf v}_j$ corresponds to the $j$-th column of $D$. 

\subsection{Vector Evaluating Algorithm}\label{VEA}
We now describe how to compute $\mathbf{v}_j$ efficiently in CKKS.
\paragraph{Evaluating $\mathbf v_j$.}
Let $\alpha_j := \exp\!\left(J_j\cdot2\pi\sqrt{-1} \right)$ for $J_j:=\tfrac{(-1)^j(2j+1)}{4p}$.
Assume that the server is given $\alpha_j$. Then, the index-wise linearly independent vectors ${\bf v}_j$'s can be efficiently evaluated as follows: It holds that ${\bf v}_j={\bf v}(\alpha_j)$ where 
\begin{equation}\label{eq:valpha}
\textstyle
{\bf v}(\alpha) := { \sqrt{\tfrac{2}{p}} 
\Big(Re(\alpha, \dots, \alpha^{\tfrac{p}{2}}), 
Im(\alpha, \dots, \alpha^{\tfrac{p}{2}-1}), 
\tfrac{1}{\sqrt{2}}\Big)^\top}.
\end{equation}
We defer the proof for ${\bf v}_j={\bf v}(\alpha_j)$ to Appendix~\ref{Appendix:DeferredProofs}.

Algorithm~\ref{alg:IVE} describes the procedure for computing encryptions of a constant multiple of the $({\bf v}_{j_i})_{i\in[N/2]}$ from the given encrypted $(\alpha_{j_i})_{i\in[N/2]}$.
The algorithm outputs $\sqrt{2p}\cdot{\bf v}_{j_i}$; this constant factor is absorbed into the subsequent matrix multiplication phase to reduce depth consumption.
By packing $\alpha_{j_i}$ across slots of $\ct$, the algorithm exploits batch computation to evaluate all $\mathbf v_{j_i}$'s in parallel.
In the first stage, the algorithm performs power generation in a multiplication-tree manner to produce ciphertexts that encrypt successive powers of $\alpha_{j_i}$'s with logarithmic multiplicative depth.
For the stated tree schedule and cost table, we assume $p/2$ is a power of two; otherwise, the vector family can be padded to the next such size.
In the second stage, it separates the resulting values into their real and imaginary parts by conjugation, outputting ciphertexts that encrypt $(2Re(\alpha_{j_i}^{k+1}))_i$ and $(2Im(\alpha_{j_i}^{k+1}))_i$ for the corresponding zero-based loop index $k$.
We note that, in line~\ref{line:n} in Algorithm~\ref{alg:IVE}, multiplying an integer constant $(\sqrt{-1})^{-1}\in \bb Z[\sqrt{-1}]$ does not consume level.

\paragraph{Matrix Multiplication.} 
By executing the above procedure in CKKS, the scaled vectors $\sqrt{2p}\cdot\mathbf{v}_{j_i}$ for $i\in [N/2]$ are computed in a batched manner, yielding an encryption of a matrix, denoted by ${\sf CT}$.
Once ${\sf CT}=(\ct_k)_{k\in[p]}$ encrypting $\sqrt{2p}\cdot{\bf v}(\alpha_{j_i})$ is available, the server performs a plaintext--ciphertext matrix multiplication ($\sf PCMM$) with the preprocessed plaintext matrix $ML/\sqrt{2p}$.
This yields encryptions of the selected embeddings $(M_{j_i})_{i\in[N/2]}$ in a batched manner.
Table~\ref{tab:ive-cost} summarizes the CKKS resources consumed in this process.

\algnewcommand\algorithmicinput{\textbf{Input:}}
\algnewcommand\algorithmicoutput{\textbf{Output:}}
\algnewcommand\Input{\item[\algorithmicinput]}
\algnewcommand\Output{\item[\algorithmicoutput]}


\begin{algorithm}[ht]
\caption{\textsf{IVE} algorithm}
\label{alg:IVE}
\begin{algorithmic}[1]
\Input $\ct$: $\dec(\ct)[i]\approx\alpha_{j_i}$.
\Output $(\ct_{k})_{k\in[p]}$: $\dec(\ct_k)[i] \approx {\sqrt{2p}}\cdot\mathbf v_{j_i}[k]$.
\State $n\gets p/2$
\State ${\ct'}_0\gets \ct$

\For{$d=0$ \textbf{to} $\log_2 n-1$}
    \State $i\gets 2^d$
    \State $t\gets \min(i,\,n-i)$
    \For{$j=1$ \textbf{to} $t$}
        \State ${\ct'}_{i+j-1}\gets {\ct'}_{i-1}\cdot {\ct'}_{j-1}$
    \EndFor
\EndFor

\For{$k=0$ \textbf{to} $n-1$}
    \State $\overline{\ct} \gets \overline{\ct'_k}$
    \State $\ct_k \gets {\ct}'_k + \overline{\ct}$
    \If{$k < n-1$}
        \State $\ct_{n+k} \gets \big({\ct}'_k-\overline{\ct}\big)/\sqrt{-1}$\label{line:n}
    \Else
        \State $\ct_{p-1}\gets {\sf Enc}(\sqrt{2})$
    \EndIf
\EndFor

\State \Return $(\ct_k)_{k\in[p]}$
\end{algorithmic}
\end{algorithm}

\begin{table}[ht]
\centering
\caption{Resource Consumption of Homomorphic Evaluation on the Server-side for {\sf IVE}, plus one {\sf PCMM} call.}
\label{tab:ive-cost}
{
\begin{tabular}{cccccc}
\toprule
\textbf{\#Add} & \textbf{\#ptMult} & \textbf{\#Mult} & \textbf{\#Conj} & \textbf{\#PCMM} & \textbf{Depth} \\
\midrule
$p-1$ & $\dfrac{p}{2}-1$ & $\dfrac{p}{2}-1$ & $\dfrac p2$ & $1$ & $\log p$ \\
\bottomrule
\end{tabular}
}
\end{table}

\section{IVE-PEL}\label{sec:ive-pel}
In the previous section, we showed how the server can homomorphically derive the target embedding vector $M_j$
once it is given an encryption of $\alpha_j$.
This section completes the construction by describing how the client can provide such an input ciphertext
with compact amortized communication while achieving the \emph{target numerical precision} in the output embedding vector.
For clarity, we describe the transmission for a batch size of $N/2$, where $N$ is the ciphertext ring degree, and the same approach applies to other batch sizes.

\subsection{Query Transmission}
\paragraph{Standard Transmission Layer}
We use standard RLWE ciphertext-transmission techniques~\cite{Hermes,10.1007/978-3-030-78372-3_18} as a communication layer.
The client forms the values $\alpha_{j_i}$, as defined in Section~\ref{VEA}, encrypts them at the lowest modulus $Q_0$ over $\mathcal{R}_{N/2,Q_0}^{2}$, and sends a seed for the $a$ component together with the $b$ component.
The server regenerates $a$ and packs the ciphertext to $\mathcal{R}_{N,Q_0}^{2}$ via ring packing, obtaining the input ciphertext for IVE.
We defer the details of ring packing to Appendix~\ref{app:A.ringpacking}.

\paragraph{IVE-PEL}
Putting the above components together, we obtain the \textsf{IVE-PEL} protocol with batch size $N/2$.
We summarize the construction in Figure~\ref{fig:IVE-PEL Protocol}. In particular, the server-side computation (${\sf Emb}$) is depicted in Figure~\ref{FigIVE}.

\begin{figure}[!ht]
\centering
\fbox{%
\begin{minipage}{0.97\linewidth}
  \centering
  \textbf{Protocol \textsf{IVE-PEL}}\par\vspace{-0.2em}
  \begin{flushleft}
  {\bf Input}: $\sf C$ holds $(j_i)_{i\in[N/2]}\in[p]^{N/2}$, $\sf S$ holds $M\in \bb R^{d\times p}$.\\
  {\bf Output}: $\sf S$ gets CKKS encryptions of ${M_{j_i}}\in \bb R^d$ for $i\in[N/2]$.
  \end{flushleft}
\vspace{-1em}
  \rule{\linewidth}{0.4pt}
  \vspace{-1em}
  \begin{itemize}
    \item ${\sf Setup}(1^\lambda)$: Share $\sf pp$ including CKKS public/evaluation parameters and ring-packing keys. $\sf C$ gets CKKS secret $\sk$.
    \item ${\sf Query}((j_i)_i,\sf pp)$: $\sf C$ samples a 128-bit seed $\rho$, uses it to generate $a$, and encrypts $\left(\alpha_{j_i}\right)_{i\in[N/2]}$ at $0$ level to get $(a,b)\in \cc R_{N/2,Q_0}^2$.
    Outputs query ${\sf q}=(\rho,b)$ and sends to $\sf S$.
    \item ${{\sf Emb}(M,{\sf q},{\sf pp})}$: $\sf{S}$ reconstructs $(a,b)$ and packs into ring degree $N$ to get $\ct\in \cc R_{N,Q_0}^2$.
    Then computes ${(\ct_{{\sf out},k})}_{k\in[d]}\gets({\sf PCMM}_{ML/\sqrt{2p}}\circ{\sf IVE})(\ct)$
  \end{itemize}
\end{minipage}}
\caption{{\sf IVE-PEL} protocol.}
\Description{Protocol box for IVE-PEL.}
\label{fig:IVE-PEL Protocol}
\end{figure}

\begin{figure*}[t]
\centering
\resizebox{\linewidth}{!}{%
\begin{tikzpicture}[x=1mm,y=1mm,font=\small]

\def\rowW{120}       
\def\rowH{11}       
\def\rowSep{2}      
\def\slotW{9}       
\def\slotH{9}       
\def\slotGap{1.6}   
\def\dotGap{4}      
\def\framePad{2}    

\definecolor{blkBlue}{HTML}{A7D8FF}
\definecolor{blkBlueB}{HTML}{3F8FBF}

\definecolor{blkGreen}{HTML}{AEEFD0}
\definecolor{blkGreenB}{HTML}{3D9B6B}

\definecolor{blkYellow}{HTML}{FFE8A3}
\definecolor{blkYellowB}{HTML}{C9A63A}

\definecolor{blkOrange}{HTML}{FFC6A1}
\definecolor{blkOrangeB}{HTML}{C7783E}

\definecolor{blkRed}{HTML}{FFB3C7}
\definecolor{blkRedB}{HTML}{C75A7A}

\definecolor{frameMain}{HTML}{A46C73}   

\definecolor{frameSub}{HTML}{95A7B6}   
\definecolor{frameSubLight}{HTML}{9AA5B1}

\tikzset{
  slot/.style={fill=white, minimum width=\slotW mm, minimum height=\slotH mm, inner sep=0pt, line width=0.35pt, rounded corners=0.5mm, font=\Large},
  dot/.style ={circle, fill=white, inner sep=0pt, minimum size=1.4mm},
  frame/.style={draw=frameSub, line width=1.1pt, rounded corners=1.2mm, inner sep=\framePad mm, fill=none},
  framea/.style={draw=black, line width=2pt, rounded corners=1.2mm, inner sep=\framePad mm, fill=none},
  arr/.style  ={-{Stealth[length=3.2mm,width=2.2mm]}, line width=1pt},
}

\newcommand{\CipherRow}[7][\Large]{%
  \pgfmathsetmacro{\midY}{0.5*\rowH}%
  \pgfmathsetmacro{\midX}{0.5*\rowW}%
  \begin{scope}[shift={(#2,#3)}]
    \path[fill=#4, draw=none] (0,0) rectangle (\rowW,\rowH);

    \foreach \i/\txt in {0/{#5},1/{#6}}{%
      \pgfmathsetmacro{\sx}{1+0.5*\slotW + \i*(\slotW+\slotGap)}%
      \node[slot, font=#1] at (\sx,\midY) {\txt};%
    }%

    \pgfmathsetmacro{\sxR}{\rowW - (1+0.5*\slotW)}%
    \node[slot, font=#1] at (\sxR,\midY) {#7};%

    \foreach \dx in {-\dotGap,0,\dotGap}{%
      \pgfmathsetmacro{\px}{0.5*(\slotW+\slotGap)+\midX + \dx}%
      \node[dot] at (\px,\midY) {};%
    }%
  \end{scope}%
}

\newcommand{\CipherRowa}[5]{%
  \pgfmathsetmacro{\midY}{0.5*\rowH}%
  \pgfmathsetmacro{\midX}{0.5*\rowW}%
  \begin{scope}[shift={(#1,#2)}]
    \path[fill=#3, draw=none] (0,0) rectangle (\rowW,\rowH);

    \foreach \i/\txt in {0/{#4}}{%
      \pgfmathsetmacro{\sx}{5+0.5*\slotW + \i*(\slotW+\slotGap)}%
      \node[slot] at (\sx,\midY) {\txt};%
    }%

    \pgfmathsetmacro{\sxR}{\rowW - (5+0.5*\slotW)}%
    \node[slot] at (\sxR,\midY) {#5};%

    \foreach \dx in {-\dotGap,0,\dotGap}{%
      \pgfmathsetmacro{\px}{\midX + \dx}%
      \node[dot] at (\px,\midY) {};%
    }%
  \end{scope}%
}

\def\leftX{0}
\def\leftY{-12}

\def\topX{0}
\def\topY{100}
\pgfmathsetmacro{\topMidY}{\topY + 0.5*\rowH}
\pgfmathsetmacro{\topMidX}{\topX + 0.5*\rowW}

\path[fill=blkBlue, draw=none] (\topX,\topY-10) rectangle (\topX+\rowW,\topY+\rowH-10);

\foreach \i/\txt in {0/{$\alpha_{j_1}$},1/{$\alpha_{j_2}$}}{
  \pgfmathsetmacro{\sx}{1+\topX + 0.5*\slotW + \i*(\slotW+\slotGap)}
  \node[slot,font=\Large] at (\sx,\topMidY-10) {\txt};
}

\foreach \i/\txt in {0/{$\alpha_{{j_s}}$}}{
  \pgfmathsetmacro{\sx}{\topX + \rowW - (1+0.5*\slotW + \i*(\slotW+\slotGap))}
  \node[slot] at (\sx,\topMidY-10) {\txt};
}

\foreach \dx in {-5,0,5}{
  \pgfmathsetmacro{\px}{0.5*(\slotW+\slotGap)+\topMidX + \dx}
  \node[dot] at (\px,\topMidY-10) {};
}

\node[font=\Large, opacity=0] (ctsum) at (\leftX+0.5*\rowW + 0.5*\slotW + 0.5*\slotGap,\topY-4){$\ct$};



\node[font=\huge] at (\leftX+0.5*\rowW + 0.5*\slotW + 0.5*\slotGap,\topY-4) {$\ct$};


\pgfmathsetmacro{\CTlabelY}{\leftY + 0.5*(4*\rowH + 3*\rowSep)}


  \pgfmathsetmacro{\yy}{\leftY + 3*(\rowH+\rowSep)}
  \CipherRow{\leftX}{\yy}{blkBlue}{$\alpha_{j_1}^1$}{$\alpha_{j_2}^1$}{$\alpha_{j_s}^1$}
  \pgfmathsetmacro{\yy}{\leftY + 2*(\rowH+\rowSep)}
  \CipherRow{\leftX}{\yy}{blkBlue}{$\alpha_{j_1}^2$}{$\alpha_{j_2}^2$}{$\alpha_{j_s}^2$}
  \pgfmathsetmacro{\yy}{\leftY + 1*(\rowH+\rowSep)}
  \CipherRow{\leftX}{\yy}{blkBlue}{$\vdots$}{$\vdots$}{$\vdots$}
  \pgfmathsetmacro{\yy}{\leftY + 0*(\rowH+\rowSep)}
  \CipherRow{\leftX}{\yy}{blkBlue}
  {$\displaystyle \alpha_{j_1}^{\scalebox{0.9}{$\scriptscriptstyle{\frac p2}$}}$}
  {$\displaystyle \alpha_{j_2}^{\scalebox{0.9}{$\scriptscriptstyle{\frac p2}$}}$}
  {$\displaystyle \alpha_{j_s}^{\scalebox{0.9}{$\scriptscriptstyle{\frac p2}$}}$}

\pgfmathsetmacro{\lblXb}{\leftX+0.5*\rowW + 0.5*\slotW + 0.5*\slotGap}
\pgfmathsetmacro{\lblYb}{\rowH+\rowSep}

\node[font=\LARGE, opacity=0] (ctsum) at (\lblXb,\lblYb)
  {$({\sf ct'})$};



\node[font=\LARGE
] at (\lblXb,\lblYb)
  {$({\sf ct'}_k)_{k\in[p/2]}$};

\pgfmathsetmacro{\CTtopY}{\leftY + 4*\rowH + 3*\rowSep}
\node[framea, fit={(\leftX,\leftY) (\leftX+\rowW,\CTtopY)}] (CTframe) {};

\pgfmathsetmacro{\arrTopY}{\topY - 12}
\pgfmathsetmacro{\arrBotY}{\CTtopY + 4}
\draw[arr] (\topMidX,\arrTopY) -- (\topMidX,\arrBotY);

\def\rightX{140}
\def\rightY{10}


  \pgfmathsetmacro{\yy}{\rightY + (3)*(\rowH+\rowSep) + 3*(\rowH+\rowSep)}
  \CipherRow{\rightX}{\yy}{blkGreen}{$2r_{11}$}{$2r_{12}$}{$2r_{1s}$}
  
    \pgfmathsetmacro{\yy}{\rightY + (2)*(\rowH+\rowSep) + 3*(\rowH+\rowSep)}
  \CipherRow{\rightX}{\yy}{blkGreen}{$2r_{21}$}{$2r_{22}$}{$2r_{2s}$}

  \pgfmathsetmacro{\yy}{\rightY + (1)*(\rowH+\rowSep) + 3*(\rowH+\rowSep)}
  \CipherRow{\rightX}{\yy}{blkGreen}{$\vdots$}{$\vdots$}{$\vdots$}

  \pgfmathsetmacro{\yy}{\rightY + (0)*(\rowH+\rowSep) + 3*(\rowH+\rowSep)}
  \CipherRow{\rightX}{\yy}{blkGreen}
  {$\displaystyle 2r_{\scalebox{0.7}{$\scriptscriptstyle{\frac p2}$}\scalebox{0.9}{$\scriptscriptstyle{1}$}}$}
  {$\displaystyle 2r_{\scalebox{0.7}{$\scriptscriptstyle{\frac p2}$}\scalebox{0.9}{$\scriptscriptstyle{2}$}}$}
  {$\displaystyle 2r_{\scalebox{0.7}{$\scriptscriptstyle{\frac p2}$}\scalebox{0.9}{$\scriptscriptstyle{s}$}}$}

\pgfmathsetmacro{\Gbot}{\rightY + 3*(\rowH+\rowSep)}
\pgfmathsetmacro{\Gtop}{\Gbot + 4*\rowH + 3*\rowSep}
\node[frame, fit={(\rightX,\Gbot) (\rightX+\rowW,\Gtop)}] (Gframe) {};

\pgfmathsetmacro{\lblX}{\rightX + 0.5*\rowW + 0.5*\slotW + 0.5*\slotGap}
\pgfmathsetmacro{\lblY}{\Gbot + 26}

\node[font=\LARGE, opacity=0] (ctsum) at (\lblX,\lblY)
  {$\sf CT'+\overline{CT'}$};



\node[font=\LARGE] at (\lblX,\lblY)
  {$({\sf ct'}_k+\overline{{\sf ct'}_k})_{k\in[p/2]}$};


  \pgfmathsetmacro{\yy}{-5 + \rightY + (2)*(\rowH+\rowSep)}
  \CipherRow{\rightX}{\yy}{blkYellow}{$2i_{11}$}{$2i_{12}$}{$2i_{1s}$}
  
  \pgfmathsetmacro{\yy}{-5 + \rightY + (1)*(\rowH+\rowSep)}
  \CipherRow{\rightX}{\yy}{blkYellow}{$\vdots$}{$\vdots$}{$\vdots$}
  
  \pgfmathsetmacro{\yy}{-5 + \rightY + (0)*(\rowH+\rowSep)}
  \CipherRow[\normalsize]{\rightX}{\yy}{blkYellow}
  {$\displaystyle 2i_{\scalebox{0.9}{$\scriptscriptstyle{\frac p2}$}\scalebox{0.6}{$\scriptscriptstyle{-}$}\scalebox{0.9}{$\scriptscriptstyle{1,1}$}}$}
  {$\displaystyle 2i_{\scalebox{0.9}{$\scriptscriptstyle{\frac p2}$}\scalebox{0.6}{$\scriptscriptstyle{-}$}\scalebox{0.9}{$\scriptscriptstyle{1,2}$}}$}
  {$\displaystyle 2i_{\scalebox{0.9}{$\scriptscriptstyle{\frac p2}$}\scalebox{0.6}{$\scriptscriptstyle{-}$}\scalebox{0.9}{$\scriptscriptstyle{1,s}$}}$}

\pgfmathsetmacro{\Ytop}{-5+\rightY + 3*\rowH + 2*\rowSep}
\node[frame, fit={(\rightX,-5+\rightY) (\rightX+\rowW,\Ytop)}] (Yframe) {};

\pgfmathsetmacro{\lblXy}{\rightX + 0.5*\rowW + 0.5*\slotW + 0.5*\slotGap}
\pgfmathsetmacro{\lblYy}{\Ytop -18.5}

\node[font=\Large, opacity=0] (ctsum) at (\lblXy,\lblYy)
  {$\dfrac{\sf CT'-\overline{CT'}}{\sqrt{-1}}[0;\tfrac p2-2]$};



\node[font=\Large] at (\lblXy,\lblYy)
  {$\left(\dfrac{{\sf ct'}_k-\overline{{\sf ct'}_k}}{\sqrt{-1}}\right)_{k\in\left[\tfrac p2-1\right]}$};

\def\botY{-12}
\pgfmathsetmacro{\Otop}{\botY + \rowH}
\CipherRow{\rightX}{\botY}{blkOrange}{$\sqrt 2$}{$\sqrt 2$}{$\sqrt 2$}
\node[frame, fit={(\rightX,\botY) (\rightX+\rowW,\Otop)}] (Oframe) {};

\node[framea, fit={(\rightX-3,-3+\botY) (\rightX+\rowW+3,\Gtop+3)}] (wholeframe) {};
\node[anchor=north,font=\huge] at (\rightX+0.5*\rowW,\Gtop+17) {${(\ct_k)_{k\in[p]}={\sf IVE}(\ct)}$};
\draw[arr] (CTframe.east) -- ++(5,0) |- (wholeframe.west);

\pgfmathsetmacro{\lblXy}{\rightX + 0.5*\rowW + 0.5*\slotW + 0.5*\slotGap}
\pgfmathsetmacro{\lblYr}{\Ytop -48.5}

\node[font=\Large, opacity=0] (ctsum) at (\lblXy,\lblYr)
  {${\sf Ecd}(\sqrt 2)$};
  


\node[font=\Large] at (\lblXy,\lblYr)
  {${\sf Ecd}(\sqrt 2)$};

\def\newrightX{280}

\node[framea, fit={(\newrightX,\botY+23) (\newrightX+\rowW,\Gtop-16)}] (newframe) {};
\draw[arr] (wholeframe.east) -- ++(5,0) |- (newframe.west);

\pgfmathsetmacro{\yy}{-4 + \rightY + (5)*(\rowH+\rowSep)}
\CipherRow{\newrightX}{\yy}{blkRed}{$m_{11}$}{$m_{12}$}{$m_{1s}$}
\pgfmathsetmacro{\yy}{-4 + \rightY + (4)*(\rowH+\rowSep)}
\CipherRow{\newrightX}{\yy}{blkRed}{$m_{21}$}{$m_{22}$}{$m_{2s}$}
\pgfmathsetmacro{\yy}{-4 + \rightY + (3)*(\rowH+\rowSep)}
\CipherRow{\newrightX}{\yy}{blkRed}{$m_{31}$}{$m_{32}$}{$m_{3s}$}
\pgfmathsetmacro{\yy}{-4 + \rightY + (2)*(\rowH+\rowSep)}
\CipherRow{\newrightX}{\yy}{blkRed}{$\vdots$}{$\vdots$}{$\vdots$}
\pgfmathsetmacro{\yy}{-4 + \rightY + (1)*(\rowH+\rowSep)}
\CipherRow{\newrightX}{\yy}{blkRed}{$m_{d1}$}{$m_{d2}$}{$m_{ds}$}

\node[frame, fit={(\newrightX+2.4,\botY+32.7)(\newrightX+8.6,\Gtop-19)},draw=frameSub,line width = 2pt]{};
\node[font=\Large] at (\newrightX+6,\botY+26)
  {${M_{j_1}}$};

  \node[frame, fit={(\newrightX+\slotW+\slotGap+2.4,\botY+32.7)(\newrightX+\slotW+\slotGap+8.6,\Gtop-19)},draw=frameSub,line width = 2pt]{};
\node[font=\Large] at (\newrightX+\slotW+\slotGap+6,\botY+26)
  {${M_{j_2}}$};

  \node[frame, fit={(\newrightX+\rowW-\slotW-\slotGap+2.1,\botY+32.7)(\newrightX+\rowW-\slotW-\slotGap+8.3,\Gtop-19)},draw=frameSub,line width = 2pt]{};
\node[font=\Large] at (\newrightX+\rowW-\slotW-\slotGap+5.7,\botY+26)
  {${M_{j_s}}$};

\pgfmathsetmacro{\lblXp}{\newrightX + 0.5*\rowW + 0.5*\slotW + 0.5*\slotGap}
\pgfmathsetmacro{\lblYp}{-4 + \rightY + (3.4)*(\rowH+\rowSep)}

\node[font=\Large, opacity=0] (ctsum) at (\lblXp,\lblYp)
  {${\sf Ecd}(\sqrt 2)$};



\node[font=\huge] at (\lblXp,\lblYp)
  {${{\sf PCMM}_{\frac{ML}{\sqrt{2p}}}\left((\ct_k)_{k\in[p]}\right)}$};

\end{tikzpicture}
}
\caption{IVE-PEL server-side procedure ($\sf Emb$).
Given a ciphertext $\ct$ encoding $(\alpha_{j_m})_{m\in[s]}$, the server obtains the embedding vectors $(M_{j_m})_{m\in[s]}$ by applying $\sf IVE$ and $\sf PCMM$.
Each row corresponds to a single CKKS ciphertext. Here, $s$ denotes the number of slots (i.e., $s=N/2$). For $1\le \ell\le p/2$ and $1\le m\le s$, $r_{\ell m}$ and $i_{\ell m}$ denote $\operatorname{Re}(\alpha_{j_m}^{\ell})$ and $\operatorname{Im}(\alpha_{j_m}^{\ell})$, respectively.}
\Description{IVE-PEL server-side procedure showing IVE and PCMM.}
\label{FigIVE}
\end{figure*}

\subsection{Numerical Precision and Parameter Selection}\label{SUBSEC:NUMERICAL}
The scaling factor $\Delta$ is chosen to meet the target precision and error budget.
We state a sufficient condition below and defer the derivation to Appendix~\ref{Appendix:MainThm}.

\paragraph{Error Model and Notation.}
Fix a correctness parameter $u$, an even index size $p$ satisfying the assumptions of Section~\ref{IVE} (or its padded size), an embedding-vector dimension $d$, and a CKKS ring degree $N$.
Let $\Delta$ be the CKKS scale factor.
Let $C_{\sf PCMM}$ bound the final PCMM RMS error by $C_{\sf PCMM}\Delta^{-1}$, and let $C_T$ satisfy $\|M\|_2\le C_T(\sqrt d+\sqrt p)$ for tables $M\in\mathcal T$.
Let $C_V$ be a local Lipschitz constant for the map $\alpha\mapsto{\bf v}(\alpha)$.
Let $C_{\rm coeff}=19.2$ denote the coefficient-domain 6-sigma bound for Gaussian error of standard deviation $\sigma=3.2$, and set
$C_0:=C_{\rm coeff}\sqrt N$ for the corresponding canonical-embedding bound, following the CKKS noise analysis~\cite{CKKS}.
Assume that after ring packing, the value entering IVE is
$\tilde\alpha_i=\alpha_{j_i}+\epsilon_{\alpha,i}$ with
\[
\left|\epsilon_{\alpha,i}\right|
\le
\frac{C_0}{\Delta}.
\]
The expanded RMS error budget is
\begin{equation}\label{eq:sufficient-condition}
\frac{C_{\sf PCMM}}{\Delta}
+
\frac{C_T(\sqrt d+\sqrt p)}{\sqrt d}
\cdot
C_V\frac{p+2}{\sqrt{12}}
\cdot
\frac{C_0}{\Delta}
\leq
\frac{1}{u},
\end{equation}

\begin{theorem}\label{thm:admissible-params}
Under the error model above, suppose~\eqref{eq:sufficient-condition} holds.
Then, with overwhelming probability, \textsf{IVE-PEL} returns encryptions of $M_{j_i}$ for every packed index $j_i$, with RMS error at most $1/u$.
\end{theorem}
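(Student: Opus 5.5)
We follow the decomposition~\eqref{align:Errorbound}, applied column by column to the batched output. For each packed index $j_i$, write $\tilde M_{j_i}$ for the decrypted output, $\tilde{\bf v}_{j_i}$ for the vector actually produced by $\sf IVE$ (after the constant $\sqrt{2p}$ is absorbed into the plaintext matrix), and split
\[
\tilde M_{j_i}-M_{j_i}=\bigl(\tilde M_{j_i}-ML\tilde{\bf v}_{j_i}\bigr)+ML\bigl(\tilde{\bf v}_{j_i}-{\bf v}_{j_i}\bigr).
\]
Dividing by $\sqrt d$ and using the triangle inequality for the RMS norm, it suffices to bound the first term by $C_{\sf PCMM}/\Delta$ and the second by the second summand of~\eqref{eq:sufficient-condition}. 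The first bound is exactly the defining property of $C_{\sf PCMM}$, since $ML/\sqrt{2p}$ is a fixed plaintext matrix and $\sf PCMM$ is applied to the ciphertexts output by $\sf IVE$.

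For the second term, use $\|ML\epsilon\|_2\le\|M\|_2\|L\|_2\|\epsilon\|_2$. Here $\|L\|_2=1$ because $D$ is orthogonal, as proved in the appendix, and $\|M\|_2\le C_T(\sqrt d+\sqrt p)$ for $M\in\mathcal T$. After dividing by $\sqrt d$, we obtain the prefactor $C_T(\sqrt d+\sqrt p)/\sqrt d$. It then remains to show
\[
\|\tilde{\bf v}_{j_i}-{\bf v}_{j_i}\|_2\le C_V\frac{p+2}{\sqrt{12}}\cdot\frac{C_0}{\Delta}.
\]

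\textbf{Main obstacle: bounding the IVE error.} I would treat $\sf IVE$ as the exact evaluation of the polynomial map $\alpha\mapsto{\bf v}(\alpha)$ of~\eqref{eq:valpha}, perturbed by rescaling and rounding errors. The input perturbation satisfies $|\epsilon_{\alpha,i}|\le C_0/\Delta$ by assumption.
\begin{itemize}
\item \emph{Propagated input error.} The $k$-th power has derivative $k\alpha^{k-1}$, and $|\alpha|=1$ near the unit circle. Its perturbation is therefore about $k|\epsilon_\alpha|$. The real and imaginary parts inherit the same bound.
\item \emph{Summing over coordinates.} Summing squares over the coordinates and including the factor $\sqrt{2/p}$ gives a norm of order $\sqrt{(2/p)\sum_{k\le p/2}k^2}\,|\epsilon_\alpha|$. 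Since $\sum k^2\approx (p/2)(p/2+1)(p+1)/6$, this is $\lesssim\frac{p+2}{\sqrt{12}}|\epsilon_\alpha|$. I would package this as the local Lipschitz constant $C_V$ times the factor $(p+2)/\sqrt{12}$, checking that the arithmetic matches the stated constant and absorbing any slack into $C_V$.
\item \emph{Rescaling errors.} The fresh rescaling errors introduced inside the multiplication tree are each $O(\sqrt N/\Delta)$ at depth $\log p$. I would argue that they are dominated by, or absorbed into, the same budget via $C_V$ and $C_0$. The theorem's error model lets us fold them into these constants.
\item \emph{Conjugation and $\sqrt{-1}$.} The conjugation and division by $\sqrt{-1}$ are exact up to key-switching noise, which is likewise absorbed.
\end{itemize}

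Combining the two bounds gives an RMS error at most the left side of~\eqref{eq:sufficient-condition}, which is at most $1/u$. The ``overwhelming probability'' comes from the $6\sigma$ Gaussian tail bound underlying $C_{\rm coeff}$, together with a union bound over slots and ciphertexts. The claim holds simultaneously for every packed $j_i$ because each bound above was slotwise.
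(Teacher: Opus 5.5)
Your proposal is correct and follows the paper's proof essentially step for step. It uses the same split of $\tilde M_j-M_j$ into the PCMM forward error plus $ML\epsilon_{{\bf v}_j}$ and the same bound $\|ML\|_2=\|M\|_2\le C_T(\sqrt d+\sqrt p)$ from orthogonality of $L$. It also uses the same Lipschitz lemma, $\frac{2}{p}\sum_{k=1}^{p/2}k^2=\frac{(p+1)(p+2)}{12}\le\frac{(p+2)^2}{12}$ (with the real and imaginary coordinates for each $k$ jointly contributing $|(\alpha+\epsilon)^k-\alpha^k|^2$), combined with $|\epsilon_\alpha|\le C_0/\Delta$. Your extra remark about IVE's internal rescaling and key-switching noise goes beyond the paper, whose proof simply identifies the IVE output with ${\bf v}(\tilde\alpha)$ and never tracks that noise.
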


\paragraph{Interpreting the condition.}
Define
\[
S(\Delta)
:=
\left(\frac{\Delta}{u}-C_{\sf PCMM}\right)
\frac{\sqrt d}{C_V C_T(\sqrt d+\sqrt p)}
\frac{\sqrt{12}}{p+2}.
\]
For $\Delta/u>C_{\sf PCMM}$, the condition becomes
\begin{equation}\label{eq:s-delta-condition}
C_0\le S(\Delta).
\end{equation}
Equivalently, it suffices to choose
\[
\Delta
\ge
uC_{\sf PCMM}
+
u\cdot
\frac{C_V C_T(\sqrt d+\sqrt p)}{\sqrt d}\cdot
\frac{p+2}{\sqrt{12}}\cdot
C_0.
\]
If $C_{\sf PCMM}=o(\Delta/u)$, then this scale satisfies
\begin{equation}\label{eq:gap-scaling}
\log\Delta
\;=\;
\log\;\!\Big( u\cdot p\cdot C_0\cdot\big(1+\sqrt{p/d}\ \big)\Big) + O(1),
\end{equation}
where $O(1)$ hides absolute constants and table-dependent factors.

\paragraph{Query-size Accounting}
We generate the $a$ component from a 128-bit seed and transmit the $b$ component at modulus $Q_0$.
Sending one encrypted scalar query slot for each of the $N/2$ indices costs
\(
128 + ({N}/{2})\log Q_0
\) bits.
Thus the amortized multiplicative overhead relative to the plaintext index representation is
\[
\frac{128+({N}/{2})\log Q_0}{({N}/{2})\log p}
=
\frac{\log Q_0}{\log p}
+\frac{256}{N\log p}.
\]
With $Q_0=\Delta$ and $\Delta$ chosen as above, this amounts to
\begin{equation}\label{eq:overhead} 
1+\frac{\log\Bigl( u\cdot C_0\cdot (1+\sqrt{p/d}) \Bigr)}{\log p}
+O\!\left(\frac{1}{\log p}\right)
+\frac{256}{N\log p},
\end{equation}
which is bounded with respect to $p$ for fixed target precision.

\section{Implementation}\label{Section: Implementation}
In this section, we evaluate the practical implementation performance of private embedding lookup. The experiments are divided into two parts. In the first experiment, presented in Section~\ref{Section: Implementation1}, we compare our method with the private embedding lookup method of Kim et al.~\cite{KPLC24}, which was the most efficient prior approach at the embedding-layer level for lookup from an encrypted index. This experiment focuses on the cost difference in vector generation. In the second experiment, presented in Section~\ref{Section: Implementation2}, we integrate our lookup method into an end-to-end encrypted inference pipeline and evaluate its practicality at the application level.
For this purpose, we use a FastText-style text classification model on datasets representing practical text-classification workloads, where lightweight embedding-based classification is a natural fit and the embedding layer accounts for a significant part of the computation.
The results show that our method can efficiently process private token queries while preserving accurate classification performance.

Throughout the experiments, we consider a setting where the client sends many private token queries to the server at once. This setting naturally arises when a document contains many tokens, when multiple documents are processed together, or when several inference requests are batched. In such cases, CKKS SIMD packing allows the cost of homomorphic operations to be amortized across many queries. Therefore, in our implementation, we apply plaintext-ciphertext matrix multiplication (PCMM)~\cite{PCMM} to multiply the independent vectors generated from multiple queries by the embedding table simultaneously, thereby reducing the cost of the linear part. We separately report the case where the number of packed queries is smaller than the number of CKKS SIMD slots in Appendix~\ref{Appendix:SmallQueries}.

\paragraph{Implementation Environment}
Our CKKS implementations are built on the C++ HEaaN library~\cite{Github_HEaaN} and executed on an Intel Xeon Gold 6542Y with a single CPU thread.
For the linear transformation, we use PCMM algorithm~\cite{PCMM}, while plaintext matrix multiplications are computed with the single-threaded OpenBLAS library~\cite{OpenBLAS}.
Training experiments are performed on an NVIDIA GeForce RTX 4090 GPU.

\subsection{Evaluation of Private Embedding Lookup}\label{Section: Implementation1}
We first compare Kim et al.~\cite{KPLC24} and our IVE-PEL as private embedding lookup methods.
This experiment evaluates the server-side cost of performing private embedding lookup from encrypted token indices and shows the performance gain of replacing one-hot vector generation with IVE.

\subsubsection{Implementation Detail}

\paragraph{Embedding Datasets}
We use three widely used embedding tables as source pretrained embeddings:
\begin{itemize}
    \item \textbf{GloVe.42B.300d}~\cite{Glove}: $P=1{,}917{,}494$ words, $d=300$.
    \item \textbf{GloVe.6B.50d}~\cite{Glove}: $P=400{,}000$ words, $d=50$.
    \item \textbf{GPT-2}~\cite{GPT2}: $P=50{,}257$ tokens, $d=768$.
\end{itemize}
Here, $P$ denotes the number of distinct token entries in the source embedding table, and $d$ denotes the embedding dimension, i.e., the size of each embedding vector.
These source embedding tables are used to train compressed embedding
representations offline.

\paragraph{Embedding Table Compression}
As discussed in Section~\ref{subsec:intro-design-space}, table compression is
used as a front-end before private lookup is applied.
For each source embedding table
$M\in\mathbb{R}^{P\times d}$, we adopt the table-compression method
of~\cite{Shu} and follow the implementation in~\cite{Github} for training.
The method represents $M$ using $\ell$ compressed subtables
$M^{(1)},\ldots,M^{(\ell)}\in\mathbb{R}^{p\times d}$.
Each original token index $j\in[P]$ is represented by an index tuple
$(j_1,\ldots,j_\ell)\in[p]^\ell$, and its embedding is reconstructed as
\[
M_j \approx \sum_{i=1}^{\ell} M^{(i)}_{j_i}.
\]
Thus, the embedding table size is reduced from $P\times d$ to
$\ell p\times d$.
In all experiments in this subsection, we fix $\ell=4$ and vary the subtable
size $p$.

To assess whether the compressed embeddings preserve downstream utility, we
report IMDB~\cite{IMDB} sentiment classification accuracy.
For this evaluation, we train a lightweight LSTM classifier using the
compressed GloVe embeddings as pretrained embeddings.
Detailed compression training settings and IMDB classifier training details are
provided in Appendix~\ref{Appendix:Experiment}.

\paragraph{Experimental Setup}
For each compressed embedding table and each choice of subtable size $p$, we
compare the method of Kim et al.~\cite{KPLC24} and our IVE-PEL under the same
compressed tables, CKKS parameters, and PCMM algorithm. Because both methods use the identical scalar-index query, their client-to-server communication is the same, and we therefore focus the comparison on server-side cost.
We decompose the PEL pipeline into two parts: $\sf VecGen$ and $\sf PCMM$.
The $\sf VecGen$ stage generates the encrypted one-hot vectors in Kim et al.'s
method and the linearly independent vectors in ours.
The $\sf PCMM$ stage performs plaintext-ciphertext matrix multiplication between
the corresponding plaintext table and the generated encrypted vectors.
In our method, this plaintext table refers to the preprocessed matrix obtained
by multiplying the embedding matrix with the basis-change matrix $L$.\footnote{
This preprocessing is performed entirely by the server in plaintext and is a
one-time offline cost. In the same single-threaded CPU setting as our
experiments, it takes only $0.54$ seconds even for the setting $\log p=10$,
$d=300$, and $\ell=4$.}
Table~\ref{tab:compare} reports the amortized per-token time of $\sf VecGen$,
$\sf PCMM$, and the total PEL pipeline, together with the multiplicative depth
consumed by PEL, for both methods across different datasets and values
of $p$.

\paragraph{CKKS Parameter.}
To fairly compare the algorithmic cost of our method with that of Kim et al., we use the same CKKS parameter set for both methods. The parameters are chosen to support the method of Kim et al., which requires a large multiplicative depth. The concrete parameter set is given in Table~\ref{tab:ckks_param}. The resulting parameter set provides at least 128-bit security according to the Lattice Estimator~\cite{Lattice_Est}.

\subsubsection{Implementation Results}
Table~\ref{tab:compare} shows that our method reduces the cost of $\sf VecGen$
and thereby lowers the total cost of private embedding lookup.
Across all datasets and all values of $p$, the main performance difference between the two methods appears in the $\sf VecGen$ stage.
Recall that Kim et al.'s $\sf VecGen$ requires $O(p\log p)$ homomorphic operations for lookup size $p$, while ours requires only $O(p)$.
The results in Table~\ref{tab:compare} are consistent with this complexity difference, especially for larger values of $p$.


For example, on GloVe.6B.50d, the amortized $\sf VecGen$ time of Kim et al.'s
method increases from $11.7711\,$ms at $\log p=6$ to $424.7096\,$ms at $\log p=10$.
In contrast, the corresponding cost of IVE-PEL increases only from $0.1953\,$ms to $3.1543\,$ms.
This gives a $\sf VecGen$ speedup of up to $134.6\times$.
Consequently, for $\log p=10$ on GloVe.6B.50d, IVE-PEL reduces the total amortized PEL time from $427.0195\,$ms to $5.4461\,$ms, achieving a $78.4\times$ total speedup.
For the same $\log p=10$ setting, the total speedup is $36.4\times$ on GloVe.42B.300d.
For GPT-2, where the embedding dimension is larger ($d=768$), the $\sf PCMM$ stage accounts for a larger fraction of the total time, but IVE-PEL still achieves an $18.5\times$ total speedup.



A larger subtable size can better preserve the quality of the compressed embedding representation, as reflected by the IMDB accuracy results, but it also increases the cost of private lookup.
IVE-PEL reduces this cost increase by substantially lowering the $\sf VecGen$ cost.
In addition, our method consumes only about $1/3$ of the multiplicative depth required by Kim et al.'s method across the tested values of $p$.
Thus, IVE-PEL makes larger compressed subtables more practical for private embedding lookup.

\begin{table*}[t]
    \centering
    \caption{Comparison of IVE-PEL and Kim et al. on three datasets.
    For all experiments, each token is represented by four subtable indices,
    and $2^{16}$ tokens are processed simultaneously via CKKS SIMD packing.
    $\sf{VecGen}$, $\sf{PCMM}$, and $\sf{Total}$ report the amortized time (ms) for vector generation, plaintext-ciphertext matrix multiplication, and the overall $\sf PEL$ pipeline, respectively.
    $\sf{Depth}$ measures the multiplicative depth consumed during evaluation.
    IMDB accuracy indicates the embedding quality after compression; the uncompressed baseline accuracies are $0.8570$ for GloVe.6B.50d and $0.8857$ for GloVe.42B.300d. We do not report IMDB accuracy for GPT-2
    because less than $10\%$ of IMDB tokens overlap with its vocabulary.
    All timings are averaged over 10 runs, except Kim et al. with $\log p=10$, which is averaged over 2 runs due to its high computational cost.}
    \label{tab:compare}
    {
    \begin{tabular}{cccccccc}
    \toprule
    Dataset
    & $\log p$
    & Method 
    & \multicolumn{1}{c}{\makecell{\sf VecGen}}
    & \multicolumn{1}{c}{\makecell{\sf PCMM}} 
    & \multicolumn{1}{c}{\makecell{\sf Total}}
    & \multicolumn{1}{c}{\makecell{\sf Depth}}
    & \multicolumn{1}{c}{\makecell{IMDB Accuracy}}\\
    \cmidrule(lr){1-1}\cmidrule(lr){2-3}\cmidrule(lr){4-6}\cmidrule(lr){7-7}\cmidrule(lr){8-8}\\[-3.3ex]
    \cmidrule(lr){1-1}\cmidrule(lr){2-3}\cmidrule(lr){4-6}\cmidrule(lr){7-7}\cmidrule(lr){8-8}

    \multirow{6}{*}{\makecell{GloVe.6B.50d}}
    & \multirow{2}{*}{\makecell{6}} & Ours & 0.1953 & 0.1526 & 0.3480 & 6 & \multirow{2}{*}{\makecell[c]{0.7797}} \\
    &                             & Kim et al. & 11.7711 & 0.1729 & 11.9440 & 20 & \\    
    \cmidrule(lr){2-8}
    
    & \multirow{2}{*}{\makecell{8}} & Ours & 0.7895 & 0.5583 & 1.3478 & 8 & \multirow{2}{*}{\makecell[c]{0.8147}} \\
    &                            & Kim et al. & 69.4889 & 0.6201 & 70.1090 & 24 & \\    
    \cmidrule(lr){2-8}
    
    & \multirow{2}{*}{\makecell{10}} & Ours & 3.1543 & 2.2918 & 5.4461 & 10 & \multirow{2}{*}{\makecell[c]{0.8073}} \\
    &                            & Kim et al. & 424.7096 & 2.3099 & 427.0195 & 29 & \\
    \bottomrule

    \multirow{6}{*}{\makecell{GloVe.42B.300d}}
    & \multirow{2}{*}{\makecell{6}} & Ours & 0.1942 & 0.6723 & 0.8666 & 6 & \multirow{2}{*}{\makecell[c]{0.8053}} \\
    &                             & Kim et al. & 11.6995 & 0.6948 & 12.3943 & 20 & \\    
    \cmidrule(lr){2-8}
    
    & \multirow{2}{*}{\makecell{8}} & Ours & 0.8032 & 2.2757 & 3.078 & 8 & \multirow{2}{*}{\makecell[c]{0.8340}} \\
    &                            & Kim et al. & 70.6454 & 2.4227 & 73.0681 & 24 & \\    
    \cmidrule(lr){2-8}
    
    & \multirow{2}{*}{\makecell{10}} & Ours & 3.1488 & 8.7470 & 11.8958 & 10 & \multirow{2}{*}{\makecell[c]{0.8453}} \\
    &                            & Kim et al. & 423.5058 & 9.5222 & 433.0280 & 29 & \\
    \bottomrule
    
    \multirow{6}{*}{\makecell{GPT-2}}
    & \multirow{2}{*}{\makecell{6}} & Ours & 0.1932 & 1.6957 & 1.8889 & 6 & \multirow{2}{*}{\makecell[c]{-}} \\
    &                             & Kim et al. & 12.3247 & 1.7110 & 14.0357 & 20 & \\    
    \cmidrule(lr){2-8}
    
    & \multirow{2}{*}{\makecell{8}} & Ours & 0.7974 & 5.5967 & 6.3942 & 8 & \multirow{2}{*}{\makecell[c]{-}} \\
    &                            & Kim et al. & 70.9080 & 5.6021 & 76.5101 & 24 & \\    
    \cmidrule(lr){2-8}
    
    & \multirow{2}{*}{\makecell{10}} & Ours & 3.1385 & 20.8469 & 23.9855 & 10 & \multirow{2}{*}{\makecell[c]{-}} \\
    &                            & Kim et al. & 423.9253 & 20.8978 & 444.8231 & 29 & \\
    \bottomrule    
    \end{tabular}}
\end{table*}

\subsection{End-to-End Encrypted FastText Inference}\label{Section: Implementation2}
We next integrate private embedding lookup into an end-to-end encrypted text classification pipeline. This experiment evaluates whether the performance improvement observed in private embedding lookup leads to faster end-to-end inference while preserving the classification accuracy of the underlying plaintext model.

\subsubsection{FastText Model Structure}\label{Subsubsection:Fasttext}

\begin{figure}[h]
    \centering
    \includegraphics[width=1\linewidth]{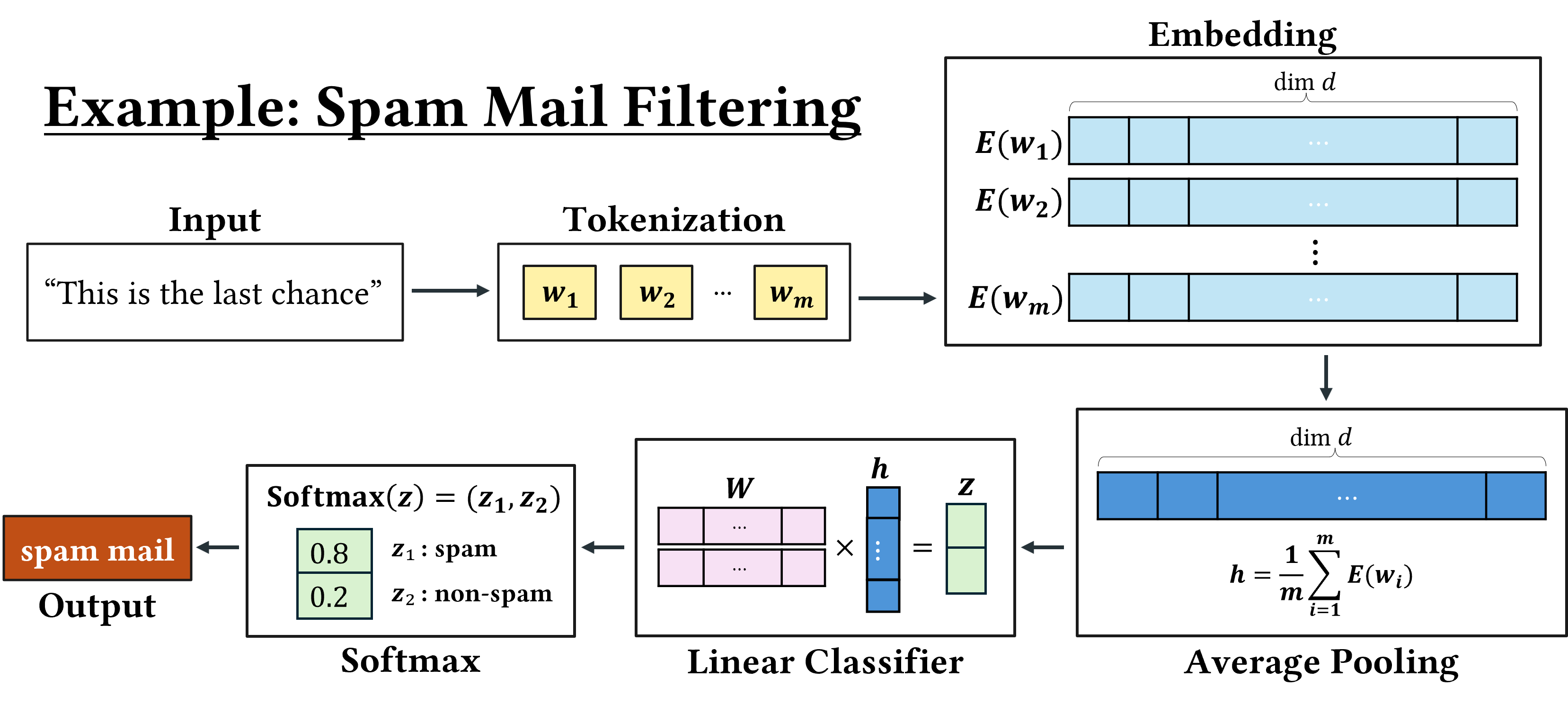}
    \caption{Overview of the fastText model used in our implementation. The model maps input tokens to embeddings, averages them into an input-level representation, and applies a single linear classifier without a bias term.}
    \label{fig:fasttext}
\end{figure}

FastText~\cite{fasttext} is a shallow text classification model showing that fast and accurate text classification can be achieved without a complex deep neural network, using only the average of word embeddings followed by a single linear classifier. As illustrated in Figure~\ref{fig:fasttext}, an input sentence is first tokenized into words, and each word is mapped to an embedding vector. The sentence representation is then obtained by averaging these word embeddings. Finally, a linear classifier is applied to compute the class scores.

In this work, we use the same FastText structure during training, but combine the embedding lookup and the linear classifier during encrypted inference. Specifically, the class score can be rewritten as
\[
z
=
W\left(\frac{1}{m}\sum_{i=1}^{m} E[w_i]\right)
=
\frac{1}{m}\sum_{i=1}^{m} W E[w_i].
\]
Therefore, the final class score is obtained by looking up the precomputed score vector \(WE[w_i]\) for each word and summing these vectors.

After decryption, the predicted label is determined by taking \(\arg\max\) over the resulting scores. The averaging factor \(1/m\) is a positive scalar applied equally to all class scores, and hence it does not affect the \(\arg\max\) result. Therefore, in the actual encrypted inference, we compute
\[
\tilde z = \sum_{i=1}^{m} W E[w_i]
\]
and determine the class label after decryption as
\[
\hat y = \arg\max_j \tilde z_j.
\]
Since \(\tilde z = mz\) and \(m>0\), this gives the same \(\arg\max\)-based prediction as the original FastText model.



\subsubsection{Implementation Detail}

\paragraph{Datasets}
We use two datasets for end-to-end encrypted FastText inference.
These datasets represent privacy-sensitive text domains: emails may contain personal or confidential communication, while drug reviews may reveal health-related information.
Here, a unigram means a single word token, and a bigram means an adjacent pair of words; for example, the sentence ``this is good'' contains the unigram tokens ``this'', ``is'', and ``good'', and the bigram tokens ``this is'' and ``is good''.

\begin{itemize}
    \item \textbf{Enron-Spam}~\cite{Enron}: email spam classification
    consisting of 33,716 emails with binary spam/ham labels.
    We concatenate the subject and body of each email and use unigram tokens.
    We set the embedding dimension to $d=50$.
    Existing plaintext baselines report around $97\%$--$99\%$
    accuracy~\cite{Enron_baseline1, Enron_baseline2, Enron_baseline3}.

    \item \textbf{Drugs.com Review}~\cite{Drug}: drug-review sentiment
    classification consisting of 215,063 reviews.
    We binarize the original rating into negative/positive sentiment labels
    and use both unigram and bigram tokens from the review text.
    We set the embedding dimension to $d=300$.
    Existing plaintext baselines report around $88\%$--$94\%$
    accuracy~\cite{Drug_baseline1, Drug_baseline2, Drug_baseline3}.
\end{itemize}

Detailed preprocessing rules, split construction, vocabulary construction are given in Appendix~\ref{Appendix:Experiment}.

\paragraph{Experimental Setup}
Unlike Section~\ref{Section: Implementation1}, which compresses existing
pretrained embedding tables, this experiment trains the compressed embedding
representation from scratch as part of the FastText classifier.
For both datasets, we use $\ell=4$ compressed subtables, each of size $p=256$,
so the total compressed embedding table size is
$\ell p \times d = 1024 \times d$, where $d$ denotes the embedding dimension.

During plaintext training, each input example is padded or truncated to $128$
tokens, and the compressed subtables, the mapping from each token to an
$\ell$-tuple of subtable indices, and the linear classifier are trained jointly
for the target classification task.
After training, each token is mapped to a fixed index tuple
$(j_1,\ldots,j_\ell)\in[p]^\ell$ with $\ell=4$.
This index tuple is encrypted and used as the private lookup query during
encrypted inference.
Since both tasks are binary classification tasks, the linear classifier is
represented by a matrix $W\in\mathbb{R}^{2\times d}$ and is trained without a
bias term.
Detailed plaintext training settings and hyperparameters are provided in
Appendix~\ref{Appendix:Experiment}.

As explained in Section~\ref{Subsubsection:Fasttext}, we fold the linear classifier into the lookup using
precomputed score subtables corresponding to $WE$.
Thus, encrypted inference consists of $\sf VecGen$ followed by $\sf PCMM$ over
the $\ell$ score subtables, and the resulting score contributions are summed
over the $\ell$ subtables and the $128$ tokens.

\paragraph{CKKS Parameter}
As shown in the Table~\ref{tab:compare}, our method requires a smaller multiplicative depth than Kim et al.'s method. This directly benefits end-to-end encrypted inference, where bootstrapping is often the dominant bottleneck. Reducing the required depth can lower the number of bootstrapping operations and may also enable lighter parameters, such as a smaller modulus budget or a lower ring degree, $N=2^{16}$.
In lightweight models such as FastText, where end-to-end encrypted inference can be performed without bootstrapping, these lighter parameters directly reduce the cost of the entire inference pipeline.
Thus, while Table~\ref{tab:compare} uses the same CKKS parameter set for both
methods, the end-to-end experiment uses separate parameter sets chosen to
satisfy the depth requirement of each method.
The concrete parameter sets are given in Table~\ref{tab:ckks_param}. Both sets provide at least 128-bit security based on the lattice estimator~\cite{Lattice_Est}.

\begin{table}[h]
\centering
\caption{CKKS parameter sets used for the implementation results.
For Table~\ref{tab:compare}, both methods use the same parameter set.
The terms $N$, $PQ$, $\Delta$, $\sf dnum$, and $h$ denote the ring degree, largest modulus, scale factor, gadget rank, and Hamming weight, respectively.}
\label{tab:ckks_param}
\begin{tabular}{c c c c c c c}
\hline
 & Method & $\log N$ & $\log PQ$ & $\log \Delta$ & $\sf dnum$ & $h$\\
\hline
Table~\ref{tab:compare} & - & 17 & 2070 & 51 & 3 & 64\\
\hline
\multirow{2}{*}{Table~\ref{tab:fasttext}}
& Ours & 16 & 1555 & 42 & 5 & 192 \\
& Kim et al. & 17 & 2070 & 51 & 3 & 64\\
\hline
\end{tabular}
\end{table}

\paragraph{CKKS SIMD Packing Setup}
For encrypted inference, each input example (an email in Enron-Spam or a review in Drugs.com Review) is represented by $128$ tokens, and each token is encoded by $\ell=4$ subtable indices.
In our implementation, the $\ell$ subtable-index streams are stored in $\ell$ separate ciphertexts during the $\sf VecGen$ stage.
Thus, each ciphertext uses $128$ SIMD slots per input example.
A CKKS ciphertext with ring dimension $N$ provides $N/2$ SIMD slots.
Therefore, our parameter set with $N=2^{16}$ provides $2^{15}$ slots and can pack $2^{15}/128=256$ examples, while Kim et al.'s parameter set with $N=2^{17}$ provides $2^{16}$ slots and can pack $2^{16}/128=512$ examples.
The time for each method is divided by the number of input examples processed simultaneously, and is reported as per-example time.

\subsubsection{Implementation Results}
Table~\ref{tab:fasttext} shows the end-to-end encrypted FastText inference results on the two text classification datasets.
Both datasets use the same compressed-table configuration with $\ell=4$ and $p=256$, but represent different practical text domains: email filtering and health-related review classification.
For each dataset, both methods use the same plaintext FastText model and therefore achieve the same classification accuracy.
The resulting accuracies are comparable to the plaintext baselines reported for the corresponding datasets.
The difference in encrypted inference comes from the private lookup method and the CKKS parameter set enabled by its multiplicative-depth requirement.

The main performance difference comes from the $\sf VecGen$ stage.
In the Enron-Spam experiment, $\sf VecGen$ accounts for $99.6\%$ of the total amortized inference time in Kim et al.'s method, making it the dominant part of the encrypted pipeline.
IVE-PEL reduces the amortized $\sf VecGen$ time from $9.0630\,$s to $0.0552\,$s per example, and the resulting $\sf VecGen$ fraction becomes $66.3\%$.
The $\sf Linear$ stage is also faster in our implementation, because our smaller depth requirement allows the use of a smaller ring degree. Together, these effects reduce the total amortized inference time from $9.1017\,$s to $0.0832\,$s, achieving a $109.4\times$ speedup. The Drugs.com Review experiment shows the same trend: IVE-PEL reduces the total amortized inference time from $9.1843\,$s to $0.0862\,$s, achieving a $106.5\times$ end-to-end speedup.

These results show that our method can be integrated into an end-to-end encrypted FastText inference pipeline while preserving baseline classification accuracy.
The same design principle also applies to NLP inference pipelines that start from an embedding layer and process many token queries.
In such settings, the proposed method can improve end-to-end encrypted inference by reducing the cost of server-side private lookup at the input stage.

\begin{table*}[t]
    \centering
    \caption{End-to-end encrypted FastText inference on two datasets.
    For all experiments, the FastText embedding layer is represented by four compressed subtables with $\log p=8$.
    Under their respective CKKS parameter sets, $256$ input examples are processed simultaneously by our method, while $512$ input examples are processed simultaneously by Kim et al.'s method.
$\sf{VecGen}$, $\sf{Linear}$, and $\sf{Total}$ report the amortized per-example time (s) for vector generation, the linear computation combining lookup and classification, and the overall encrypted inference pipeline, respectively.
Accuracy reports the plaintext test accuracy of the same FastText model evaluated in encrypted inference. All timings are averaged over 10 independent runs.}
    \label{tab:fasttext}
    {
    \begin{tabular}{ccccccc}
    \toprule
    Dataset
    & Method 
    & \multicolumn{1}{c}{\makecell{\sf VecGen}}
    & \multicolumn{1}{c}{\makecell{\sf Linear}} 
    & \multicolumn{1}{c}{\makecell{\sf Total}}
    & \multicolumn{1}{c}{\makecell{Accuracy}}\\
    \cmidrule(lr){1-1}\cmidrule(lr){2-2}\cmidrule(lr){3-5}\cmidrule(lr){6-6}\\[-3.3ex]
    \cmidrule(lr){1-1}\cmidrule(lr){2-2}\cmidrule(lr){3-5}\cmidrule(lr){6-6}

    \multirow{2}{*}{\makecell{Enron-Spam}}
    & Ours & 0.0552 & 0.0280 & 0.0832 & \multirow{2}{*}{\makecell[c]{0.9887}} \\
    & Kim et al. & 9.063 & 0.0386 & 9.1017 &\\    
    \cmidrule(lr){1-6}
    
    \multirow{2}{*}{\makecell{Drugs.com Review}}
    & Ours & 0.0568 & 0.0294 & 0.0862 & \multirow{2}{*}{\makecell[c]{0.9015}} \\
    & Kim et al. & 9.1451 & 0.0392 & 9.1843 &\\    
    \bottomrule
  
    \end{tabular}}
\end{table*}

\section{Conclusion}
In this work, we formalized \emph{Private Embedding Lookup} (PEL), a primitive that homomorphically performs the lookup of the corresponding embedding vector from an encrypted index query.
To make PEL practical, we consider both client-side communication and server-side computation.
Rather than constructing a one-hot vector and multiplying it by the table, we propose a new evaluation algorithm based on an orthogonal \emph{Discrete Cosine Transform} (DCT) basis, leveraging offline preprocessing of the embedding table and the complex-arithmetic features of CKKS.
Starting from encrypted queries that carry only single-slot index information, our design minimizes communication and reduces the dominant vector generation cost to $O(p)$, significantly lowering server-side computation.
In our implementation, this improves amortized lookup time by up to $78.4\times$ over the prior one-hot-based method.
In end-to-end encrypted FastText inference on the Enron-Spam dataset, it lowers the share of vector generation in the total inference time from $99.6\%$ to $66.3\%$, showing that the faster lookup also speeds up the overall encrypted inference.

\appendix

\begin{acks}
\end{acks}

\begin{ethics}
This work aims to improve the efficiency of privacy-preserving machine learning by reducing the cost of encrypted embedding lookup.
The proposed techniques are intended to help protect client inputs during outsourced inference and do not require the collection of new personal data or interaction with human subjects.
Our experiments use publicly available datasets and pretrained embedding tables, and we report only aggregate performance and accuracy results.

As with other cryptographic and privacy-enhancing techniques, deployment requires careful parameter selection and implementation review.
Incorrect parameters or side-channel vulnerabilities outside the formal protocol model could weaken the intended privacy guarantees.
We therefore view this work as a building block for privacy-preserving inference systems rather than a complete end-to-end security solution.
\end{ethics}

\begin{openscience}
Our implementation is currently built on the HEaaN library, which is not publicly available. Therefore, we are unable to release the full implementation artifact at this time. This limitation is due to the underlying HE library dependency rather than the datasets or training data used in our experiments. All datasets used in this work are publicly available, and the embedding-compression procedure follows the public implementation of~\cite{Github}. The training and preprocessing settings are described in Appendix~\ref{Appendix:Experiment}, so the non-HE parts of the experimental setup are based on public data and documented procedures.

Furthermore, the IVE algorithm and the overall PEL pipeline are described explicitly in the paper. Therefore, although we cannot release the current HEaaN-based implementation itself, the method can be implemented from the algorithmic description. 
\end{openscience}

\begin{ai}
The authors used AI-based writing and coding assistance during manuscript preparation.
The tools were used to assist with language polishing, LaTeX editing, consistency checks across sections, and drafting candidate revisions for explanatory text and proof presentation.
They were not used to generate experimental data, select baselines, validate cryptographic parameters, or make final scientific judgments.
All mathematical claims, proofs, experiments, references, and final manuscript text were reviewed and approved by the authors, who take full responsibility for the content.
\end{ai}

\bibliographystyle{ACM-Reference-Format}
\bibliography{example_paper}

\section{Details on CKKS}~\label{Appendix:CKKS}
The CKKS homomorphic encryption scheme~\cite{CKKS} provides approximate arithmetic on vectors of complex numbers. 

\subsection{CKKS Encoding} 
    
For a power-of-two $N$, let $\cc R_N := \bb Z[X]/(X^N+1)$ be the plaintext space. We call $N$ the \emph{ring degree}. If $N$ is clear from the context, we write $\cc R_N = \cc R$.
Note that $\cc R\otimes \bb R = \bb R[X]/(X^N+1)$ can be viewed as a $\bb C$-vector space of dimension $N/2$ by identifying $X^{N/2}$ with the imaginary unit $\sqrt{-1}$.

We consider two bijections between $\cc R\otimes \bb R$ and $\bb C^{N/2}$. 
The first is a $\bb C$-algebra isomorphism given by the Discrete Fourier Transform:
\[
{\sf DFT}:p(X)\mapsto \left(p(\zeta_i)\right)_{i\in [N/2]},
\]
with inverse ${\sf iDFT}$, where $\zeta_i=\exp\left(\frac{5^i\sqrt{-1}\pi}{N}\right)$ denotes the $2N$-th roots of unity.
    
The second is
\[
{\sf iCoeff}:\sum_{i\in[N]} m_iX^i \mapsto(m_i+\sqrt{-1}m_{i+N/2})_{i\in[N/2]},
\]
with inverse ${\sf Coeff}$. 
Note that ${\sf iCoeff}$ and ${\sf Coeff}$ are not $\bb C$-algebra isomorphisms, but only $\bb C$-vector space isomorphisms.\footnote{That is, a $\bb C$-algebra homomorphism preserves addition, scalar multiplication, and multiplication, whereas a $\bb C$-vector-space homomorphism need not preserve multiplication.}

Also note that for every $p(X)\in \cc R\otimes \bb R$ and $V=(\zeta_i^j)_{i,j}\in \bb C^{N/2\times N/2}$, we have
\[
{\sf DFT}(p(X)) = V\cdot {\sf iCoeff}(p(X)).
\]
Or equivalently, for every $\mathbf z\in \bb C^{N/2}$,
\[
{\sf iDFT}({\bf z})={\sf Coeff}(V^{-1}\cdot {\bf z}).
\]
A complex vector $\mathbf z\in \bb C^{N/2}$ is encoded into a plaintext $m(X)\in \cc R$ as follows.
An encoding map is parameterized by two components: a bijection ${\sf f}:\bb C^{N/2}\to \cc R\otimes\bb R$ and a scaling factor $\Delta$.
For a given map ${\sf f}:\bb C^{N/2}\to \cc R\otimes \bb R$---typically including ${\sf iDFT}$ and ${\sf Coeff}$, among many other possible choices---the vector $\mathbf z$ is first mapped to a polynomial ${\sf f}(\mathbf z)\in \cc R\otimes \bb R$, then scaled by $\Delta$ and rounded, yielding
\[
{\sf Ecd}(\mathbf z)=m(X)=\round{\Delta\cdot {\sf f}(\mathbf z)}\in \cc R.
\]
The resulting rounding error
\[
\delta_{\sf Ecd}(X)=\Delta\cdot {\sf f}(\mathbf z)-\round{\Delta\cdot {\sf f}(\mathbf z)}
\]
is called the \emph{encoding error}.

When ${\sf f}={\sf iDFT}$, $m$ is referred to as the \emph{slot encoding} of $\mathbf z$, and when ${\sf f}={\sf Coeff}$, $m$ is referred to as the \emph{coefficient encoding} of $\mathbf z$. Since ${\sf DFT}$ is an isomorphism that preserves all operations (addition, scalar multiplication, and multiplication), we adopt slot encoding as the standard encoding scheme. Unless stated otherwise, \emph{encoding} refers to slot encoding by default.

When decoding a polynomial $m(X)$, we rescale and apply ${\sf f}^{-1}$, yielding 
\[
{\sf Dcd}(m(X))={\sf f}^{-1}(\Delta^{-1} m(X))\in \bb C^{N/2}.
\]

\paragraph{Error Amplification and Encoding Error.}
Assume that ${\sf f}$ is $\bb R$-linear. To study error amplification, fix an encoding map and its corresponding decoding map.
Consider a polynomial $\Delta{\sf f}(\mathbf z)\in \cc R\otimes \bb R$ that decodes to $\mathbf z\in \bb C^{N/2}$.
Then, the decoding error induced by perturbing $\Delta{\sf f}(\mathbf z)$ to $\Delta{\sf f}(\mathbf z)+e(X)$ is
\[
\begin{aligned}
        &{\sf Dcd}(\Delta{\sf f}(\mathbf z)+e(X))-{\sf Dcd}(\Delta{\sf f}(\mathbf z))\\
    = \;&{\sf f}^{-1}\!\bigl(\Delta^{-1}(\Delta{\sf f}(\mathbf z)+e(X))\bigr)
       - {\sf f}^{-1}\!\bigl(\Delta^{-1}\Delta{\sf f}(\mathbf z)\bigr) \\
    = \;&\Delta^{-1}{\sf f}^{-1}(e(X)).
\end{aligned}
\]
Therefore, for a small error polynomial $e(X)\in \cc R\otimes \bb R$, its impact on the decoded complex vector is determined by the magnitude of $e(X)$, the amplification behavior of ${\sf f}^{-1}$, and the scaling factor $\Delta$.
When ${\sf f}^{-1}={\sf DFT}$, the error per vector component is heuristically amplified by a factor of about $\sqrt{N}$.
When ${\sf f}^{-1}={\sf iCoeff}$, no such amplification occurs, and the induced error is on the order of $\Delta^{-1}\|e(X)\|_{\infty}$.
    
The encoding error $\delta_{\sf Ecd}(X)$ also propagates through the same mechanism.
Here, the plaintext-domain error is $\delta_{\sf Ecd}(X)$ with $\|\delta_{\sf Ecd}(X)\|_{\infty}\leq0.5$.
If slot encoding is used, this error can be amplified by a factor of approximately $\sqrt{N}$.

\subsection{CKKS Encryption}

From now on, for notational convenience we omit the indeterminate $X$ and write polynomials without explicitly indicating it (e.g., $a(X)$ is written as $a$).
   
For an integer $Q$, define $\cc R_{N,Q} := \cc R_N / Q\cc R_N$.
When $N$ is clear from the context, we write $\cc R_Q$ instead of $\cc R_{N,Q}$; there should be no risk of confusion between $\cc R_N$ and $\cc R_Q$.
We call $Q$ a \emph{modulus}.
Let integers $q_i \approx \Delta$ for $i=1, \dots, L$ and $q_0\geq\Delta$ be given, and let the modulus $Q_\ell=\prod_{i=0}^\ell q_i$. We call $\{Q_\ell\}_{\ell\in[L]}$ a modulus chain.
Then a \emph{ciphertext} at level $\ell$ is a pair $(a,b)\in \cc R_{Q_\ell}^2$.

If $\|m\|_\infty < Q_0/2$, $m$ can be regarded as an element of $\cc R_{Q_\ell}$, and every element of $\cc R_{Q_\ell}$ can be uniquely lifted to an element of $\cc R$ with infinity norm at most $Q_0/2$.
Under the RLWE assumption~\cite{RLWE}, $m\in \cc R_{Q_\ell}$ can be encrypted into $(a,b)\in \cc R_{Q_\ell}^2$ satisfying 
\[
as+b=m+e_{\sf RLWE}\in \cc R_{Q_\ell}
\]
for some secret $s\in \cc R$ with $\|s\|_1=h$ and $\|s\|_{\infty}=1$ (Hamming weight $h$), and an RLWE error $e_{\sf RLWE}\gets \chi_{err}$. 

Depending on the context, we say that $(a,b)\in \cc R_{Q_{\ell}}^{2}$ is an encryption of $as+b\in \cc R_{Q_{\ell}}$, or equivalently, of $[as+b]_{Q_{\ell}}\in \cc R$; when ${\sf f}$ and $\Delta$ (and hence ${\sf Dcd}$) are specified, we may also say that it is an encryption of ${\sf Dcd}([as+b]_{Q_{\ell}})\in \bb C^{N/2}$.

\paragraph{Encryption Error.}
Here, the additional error introduced in the underlying plaintext is $e_{\sf RLWE}$, and the corresponding error in the decoded complex vector is given by $\Delta^{-1}{\sf f}^{-1}(e_{\sf RLWE})$; the same discussion as above applies.

\subsection{CKKS Operations} 

CKKS supports the following operations. For convenience, we assume that the encoding is slot encoding.
\begin{itemize}
    \item \textbf{Addition}: Homomorphically performs addition, either between two ciphertexts or between a ciphertext and a plaintext.
    \item \textbf{Multiplication}: Performs multiplication over $\cc R$, either between two ciphertexts or between a ciphertext and a plaintext.
    Multiplication consumes a level: multiplying a level-$\ell$ ciphertext with a level-$\ell'$ ciphertext yields a ciphertext at level $\min\{\ell,\ell'\}-1$.
    \item \textbf{Galois Automorphism}: Homomorphically applies a Galois automorphism $\sigma\in {\sf Gal}(\cc R)$. Under slot encoding, these automorphisms correspond to entrywise complex conjugation $(z_i)_{i\in [N/2]}\mapsto (\overline{z_i})_{i\in [N/2]}$, a rotation $(z_i)_{i\in [N/2]}\mapsto (z_{i+r})_{i\in [N/2]}$ (with indices taken modulo $N/2$), or their composition.
\end{itemize}

Except for addition, these operations typically introduce a small error $e\in \cc R\otimes \bb R$ in the plaintext domain whose norm is heuristically $\|e\|_{\infty}=O(\sqrt h)$. By error amplification, this induces an error of $\Delta^{-1}{\sf f}^{-1}(e)\in \bb C^{N/2}$ in the decoded complex vector.

\subsection{CKKS Bootstrapping}
As multiplications are repeatedly performed, the ciphertext eventually exhausts its modulus, making further rescaling impossible. CKKS bootstrapping restores the modulus and enables additional multiplications.
Suppose we have a low-level CKKS ciphertext encrypting a vector $\mathbf{z}$, and we wish to perform bootstrapping on it.  
The CKKS bootstrapping procedure consists of the following four main steps: ${\sf S2C}$, ${\sf ModRaise}$, ${\sf C2S}$, and ${\sf EvalMod}$.
    
\begin{itemize}
    \item ${\sf S2C}$: Homomorphically converts a slot-encoded CKKS ciphertext into a coefficient-encoded CKKS ciphertext while approximately preserving the underlying complex vector. 
        
    \item ${\sf ModRaise}$: For a coefficient-encoded CKKS ciphertext $(a,b)\in \cc R^2_{Q_0}$ such that $as+b=m\in \cc R_{Q_0}$, this step lifts it to $\cc R^2_{Q_L}$ so that
    \[
    as+b=m+Q_0I\in \cc R_{Q_L}
    \]
    for some $I\in \cc R_{Q_L}$. (Note that ${\sf iCoeff}(Q_0I)\in Q_0\bb Z[{\sqrt{-1}}]^{N/2}$.)      
        
    \item ${\sf C2S}$: Homomorphically converts a coefficient-encoded CKKS ciphertext into a slot-encoded CKKS ciphertext while approximately preserving the underlying complex vector. 

    \item ${\sf EvalMod}$: Homomorphically removes the ${\sf iCoeff}(Q_0 I)\in Q_0\bb Z[\sqrt{-1}]^{N/2}$ term by evaluating the ``Mod'' function 
    $z\mapsto [z]_{Q_0/\Delta}$ on $Re(z)$ and $Im(z)$ independently. 
    If only the real value is required, we apply the Mod function only to $Re(z)$.

\end{itemize}

Therefore, the bootstrapping procedure can be expressed as ${\sf EvalMod}\circ {\sf C2S}\circ {\sf ModRaise}\circ {\sf S2C}$.
If the level restored by ${\sf ModRaise}$ exceeds the total level consumed by ${\sf S2C}$, ${\sf C2S}$, and ${\sf EvalMod}$, then bootstrapping successfully achieves its purpose.

\paragraph{Functional Bootstrapping.}
The ${\sf EvalMod}$ step can be replaced by evaluating a function defined modulo $Q_0/\Delta$; in this case, a bootstrapping procedure that both refreshes the level and evaluates a target function is called \emph{functional bootstrapping}~\cite{LUT1,LUT2}. A representative example is
\[
\mathbf{x} \mapsto \exp(2\pi {\sqrt{-1}} \mathbf{x})
\]
for $\mathbf{x} \in [0,1]^{N/2}$. (In this case, we require $Q_0=\Delta$.) We call the resulting functional bootstrapping procedure ${\sf ExpBTS}$. 
This procedure is similar to the standard CKKS bootstrapping described above, except that the ${\sf EvalMod}$ step is replaced by ${\sf EvalExp}$.
Consequently, ${\sf ExpBTS}$ can be expressed as ${\sf EvalExp} \circ {\sf C2S} \circ {\sf ModRaise} \circ {\sf S2C}$.

\subsection{Accuracy in CKKS}
Encoding, encryption, and each homomorphic operation in CKKS introduce an error of magnitude $O(\sqrt{h})$ in the plaintext domain. Under slot encoding, this error is mapped to the complex vector with an additional factor of $O(\Delta^{-1}\sqrt{N})$.
Therefore, the error added to the complex vector per operation is approximately $\tfrac{1}{2}\log(hN)-\log\Delta+O(1)$ bits with high probability~\cite{CKKS}.
Accordingly, CKKS with scaling factor $\Delta$ supports a per-slot precision (in bits) of
\begin{align*}
    \log \Delta - \tfrac{1}{2}\log(hN) - O(1).
\end{align*}
Hence, a larger scaling factor $\Delta$ yields correspondingly higher-precision approximate computations.

\begin{remark}
The error should not be interpreted as something that necessarily accumulates monotonically; rather, this behavior reflects the intrinsic nature of approximate computation, not a peculiarity of CKKS itself.
The classical question of whether meaningful computations remain possible under finite precision has been studied extensively; for controlling numerical error in finite-precision arithmetic, we refer to~\cite{Higham2002ASNA}.
\end{remark}

\subsection{Ring Packing}~\label{app:A.ringpacking}
Ciphertexts of different ring degrees are compatible (via an appropriate embedding). For simplicity, we describe only the cases of ring degree $N/2$ and $N$.

\paragraph{Ring-degree embedding (noise-free).}
Consider the following ring embedding:
\[
\iota:\cc R_{N/2,Q}\to \cc R_{N,Q},\qquad 1\mapsto 1,\; X\mapsto X^2.
\]
This map is well-defined since $(X^{N/2}+1)\mapsto (X^2)^{N/2}+1=X^N+1$.
Moreover, since $\iota$ is a ring homomorphism, if $(a,b)\in \cc R_{N/2,Q}^2$ satisfies $as+b=m$, then
$(\iota(a),\iota(b))\in \cc R_{N,Q}^2$ satisfies the following with respect to the secret key $\iota(s)$:
\begin{align*}
\iota(a)\,\iota(s)+\iota(b) &= \iota(as+b) \\
&= \iota(m)\in \cc R_{N,Q}.
\end{align*}
Hence, increasing the ring degree is purely algebraic and does not introduce any additional approximation error.

\paragraph{Packing for real bootstrapping + \texorpdfstring{$\sf EvalExp$}{EvalExp}.}
We now assume that bootstrapping is performed at ring degree $N$.
A coefficient-encoded plaintext can be written as
\begin{align*}
\Big(\sum_{i\in[N/2]} m_iX^i\Big)+X^{N/2}\Big(\sum_{i\in[N/2]} m_{i+N/2}X^{i}\Big),
\end{align*}
and let the first term be $r(X)$ and the second summation be $i(X)$.
Real bootstrapping recovers only the coefficients of $r(X)$. (To also recover $i(X)$, an additional $\sf EvalMod$ is required (depending on the option), which increases the runtime.)
$\sf EvalExp$ applies $x\mapsto \exp(2\pi\sqrt{-1}x)$ only to the recovered coefficients $\{m_i\}_{i\in[N/2]}$ of $r(X)$.

Therefore, for ring packing, we first lift $m\in \cc R_{N/2,Q}$ to $\iota(m)\in \cc R_{N,Q}$, and then concentrate all information of $m$ into the first $N/2$ coefficients that are accessible to $\sf EvalExp$.
To do so, we compute:
\begin{align*}
&\left(\iota(a)-X^{N/2+1}\iota(a)\right)\iota(s)+\left(\iota(b)-X^{N/2+1}\iota(b)\right)\\
&=\;\iota(m)-X^{N/2+1}\iota(m)\in \cc R_{N,Q}.
\end{align*}
In the resulting plaintext, the first $N/2$ coefficients contain all coefficient information of $m$ (at even/odd positions).
Thus, applying real bootstrapping followed by $\sf EvalExp$ to this ciphertext yields the desired values
$\{\exp(2\pi\sqrt{-1}m_i)\}_{i\in[N/2]}$
in the $N/2$ slots of a ring-degree-$N$ ciphertext.
\begin{remark}
To compute at ring degree $N$, a secret key $s\in \cc R_{N/2}$ cannot be used directly as $\iota(s)$; instead, one must use a key sampled from the secret-key distribution over $\cc R_{N}$. This can be handled via standard key-switching.
\end{remark}

\section{Deferred Proof for DCT Matrix}~\label{Appendix:DeferredProofs}
In this appendix, we provide proofs of the deferred arguments.
\subsection[Proof for Orthogonality of D]{Proof for Orthogonality of $D$}
    \begin{lemma}~\label{lemmadeferredproofs}
    Let $p$ be even integer.
        Consider a matrix $D=(D_{k,\ell})_{k,\ell}\in \bb R^{p\times p}$:
        \begin{equation*}
        \resizebox{\columnwidth}{!}{$
        D_{k,\ell} = \begin{cases} \sqrt{\tfrac{2}{p}}\, \cos\!\left( \tfrac{(-1)^\ell(2\ell+1)\pi}{2p}\,(k+1) \right) & 0 \leq k < \tfrac{p}{2} \\[1.2ex] \sqrt{\tfrac{2}{p}}\, \sin\!\left( \tfrac{(-1)^\ell(2\ell+1)\pi}{2p}\,\bigl(k-\tfrac{p}{2}+1\bigr) \right) & \tfrac{p}{2} \leq k < p-1 \\[1.2ex] \tfrac{1}{\sqrt{p}} & k = p-1 \end{cases}
        $}
        \end{equation*}
       Then, $D$ is orthogonal.
    \end{lemma}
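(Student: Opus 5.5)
The plan is to prove that the rows of $D$ are orthonormal, which is equivalent to orthogonality of $D$. Every row inner product reduces, through product-to-sum identities, to exponential sums over the angles
\[
\theta_\ell := \tfrac{(-1)^\ell(2\ell+1)\pi}{2p}, \qquad \ell \in [p].
\]
The central computation is a closed form for
\[
S(n) := \sum_{\ell\in[p]} e^{\sqrt{-1}\, n\theta_\ell}, \qquad 0\le n\le p .
\]

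Put $\omega = e^{\sqrt{-1}\pi/(2p)}$ and split the sum by the parity of $\ell$. For odd $\ell$, substitute $\ell' = p-1-\ell$. Since $p$ is even, $\ell'$ is even, and $2\ell'+1 = 2p-(2\ell+1)$. Hence $\omega^{-n(2\ell+1)} = (-1)^n\,\omega^{n(2\ell'+1)}$, and therefore
\[
S(n) = \bigl(1+(-1)^n\bigr)\sum_{m=0}^{p/2-1}\omega^{n(4m+1)} .
\]
So $S(n)=0$ for odd $n$. For even $n$, the remaining sum is geometric with ratio $e^{2\pi\sqrt{-1}\,n/p}$, so it vanishes unless $p \mid n$. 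In the range $0\le n\le p$ this leaves only two cases:
\begin{itemize}
\item $S(0) = p$;
\item $S(p) = p\,\omega^{p} = p\sqrt{-1}$, which is purely imaginary.
\end{itemize}
Taking real and imaginary parts gives $\sum_\ell \cos(n\theta_\ell) = p\,[n=0]$ for $0\le n\le p$. It also gives $\sum_\ell \sin(n\theta_\ell) = 0$ for $0\le n\le p-1$, whereas at $n=p$ the sine sum equals $p$.

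Next I check each family of row pairs, using $\cos a\cos b = \tfrac12[\cos(a-b)+\cos(a+b)]$ and the analogous identities for $\sin\cdot\sin$ and $\cos\cdot\sin$. The first type of row is $\cos(k\theta_\ell)$ with $1\le k\le p/2$, the second is $\sin(k\theta_\ell)$ with $1\le k\le p/2-1$, and the last row is constant.
\begin{itemize}
\item \emph{Cos--cos and sin--sin pairs.} The frequencies that appear are $|k-k'|$ and $k+k'$, and both lie in $[0,p]$. The only cosine sum that survives is the one at frequency $0$, which occurs exactly when $k=k'$. This gives inner product $\tfrac{2}{p}\cdot\tfrac{p}{2} = 1$ on the diagonal and $0$ off it. The frequency $k+k'=p$ arises only for the cosine row $k=p/2$, and there only the real part of $S(p)$ enters, which is $0$.
\item \emph{Cos--sin pairs.} Here we need $\sum_\ell \sin(n\theta_\ell)$ at $n=k+k'$ and $n=k'-k$. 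Since $k\le p/2$ and $k'\le p/2-1$, we have $k+k'\le p-1$, and $\sin$ is odd, so all of these sums vanish.
\item \emph{Constant row.} Its inner product with the other rows is proportional to $\sum_\ell\cos(k\theta_\ell)$ or $\sum_\ell\sin(k\theta_\ell)$ with $k\ge 1$, and these vanish by the range conditions above. Its own squared norm is $p\cdot\tfrac1p = 1$.
\end{itemize}

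The step I expect to be the main obstacle is the evaluation of $S(n)$. The alternating sign in $\theta_\ell$ destroys the usual symmetric DCT structure, and the reflection $\ell\mapsto p-1-\ell$ is what restores a geometric sum. The boundary frequency $n=p$ is where the argument is tight: there $S(p)$ is nonzero. The proof goes through only because that case enters through its real part alone (for the row $k=p/2$), and because the sine rows stop at $k=p/2-1$, so that sine frequencies never reach $p$. I would verify these range restrictions carefully; they explain why the sine block has only $p/2-1$ rows and the constant row completes the basis.
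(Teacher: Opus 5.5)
Your proof is correct, but it runs dual to the paper's.

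\textbf{The paper's route.} The paper proves $D^\top D=I$ column by column. For columns $\ell,\ell'$ it collapses the inner product to $\frac{1}{p}\bigl(1+\cos(N\Sigma)+\cos(N\Delta)+2\sum_{r=1}^{N-1}\cos(r\Delta)\bigr)$, where $N=p/2$, $\Delta=\omega_\ell-\omega_{\ell'}$ and $\Sigma=\omega_\ell+\omega_{\ell'}$. It then uses $q_\ell=(-1)^\ell(2\ell+1)\equiv 1\pmod 4$ to see two things: $\Delta$ is a nonzero multiple of $2\pi/p$, so a full-period geometric sum over $r$ vanishes; and $N\Sigma$ is an odd multiple of $\pi/2$.

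\textbf{Your route.} You instead prove $DD^\top=I$, summing over the column parameter for fixed pairs of frequencies. All the arithmetic is packed into the single evaluation of $S(n)$. Your reflection $\ell\mapsto p-1-\ell$ and the paper's congruence $q_\ell\equiv 1\pmod 4$ amount to the same fact: the nodes $e^{\sqrt{-1}\theta_\ell}$ are exactly the $p$ roots of $z^p=\sqrt{-1}$. Their power sums are therefore $0$ for $1\le n\le p-1$ and $p\sqrt{-1}$ at $n=p$.

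\textbf{What each buys.} The paper's route needs only two cases ($\ell=\ell'$ and $\ell\neq\ell'$). It also directly certifies orthonormality of the evaluated vectors $\mathbf{v}(\alpha_j)$ that IVE uses. Yours requires the block-by-block case analysis, but it makes the choice of block sizes transparent. The only nonvanishing boundary sum $S(p)$ is purely imaginary and enters only through its real part, in the cos--cos pair at frequency $p/2$. That is the counterpart of the paper's step $\cos(N\Sigma)=0$, which compensates for the missing sine row at frequency $N$.
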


\begin{proof}
    It suffices to show $D^\top D=I.$
    Set $N:=p/2$ and define
\begin{equation*}
\omega_\ell := \frac{(-1)^\ell(2\ell+1)\pi}{2p},\qquad \ell=0,1,\dots,p-1.
\end{equation*}
For a fixed column index $\ell$, rewrite the entries of the $\ell$-th column using the
change of variables $r=k+1$ for $0\le k\le N-1$ and $r=k-N+1$ for $N\le k\le p-2$:
\begin{equation*}
D_{k,\ell} =
\begin{cases}
\sqrt{\tfrac{2}{p}}\cos(\omega_\ell r), & r=1,2,\dots,N,\\
\sqrt{\tfrac{2}{p}}\sin(\omega_\ell r), & r=1,2,\dots,N-1,\\
\tfrac{1}{\sqrt{p}}, & k=p-1.
\end{cases}
\end{equation*}
Hence, for any $\ell,\ell'$, with $\Delta:=\omega_\ell-\omega_{\ell'}$ and $\Sigma:=\omega_\ell+\omega_{\ell'}$, we obtain
\begin{equation*}
\resizebox{\columnwidth}{!}{$
\begin{aligned}
&(D^\top D)_{\ell,\ell'}
=\sum_{k=0}^{p-1} D_{k,\ell}D_{k,\ell'}\\
=\;\;&\frac{2}{p}\left(\sum_{r=1}^{N}\cos(\omega_\ell r)\cos(\omega_{\ell'} r)
+\sum_{r=1}^{N-1}\sin(\omega_\ell r)\sin(\omega_{\ell'} r)\right)+\frac{1}{p}\\
=\;\;&\frac{1}{p}\Bigl(1+\cos(N\Sigma)+\cos(N\Delta)
+2\sum_{r=1}^{N-1}\cos(r\Delta)\Bigr).
\end{aligned}
$}
\end{equation*}

\par\addvspace{\smallskipamount}
\noindent\textbf{Case 1: ($\ell=\ell'$).}
If $\ell=\ell'$, then $\Delta=0$ and $\Sigma=2\omega_\ell$, so
\begin{equation*}
\cos(N\Delta)=1,\qquad \sum_{r=1}^{N-1}\cos(r\Delta)=N-1.
\end{equation*}
Moreover, since $(2\ell+1)$ is odd,
\begin{equation*}
\cos(N\Sigma)=\cos(p\omega_\ell)
=\cos\!\left(\frac{(-1)^\ell(2\ell+1)\pi}{2}\right)=0.
\end{equation*}
Therefore
\begin{equation*}
(D^\top D)_{\ell,\ell}=\frac{1}{p}\bigl(1+0+1+2(N-1)\bigr)=\frac{2N}{p}=1.
\end{equation*}

\par\addvspace{\smallskipamount}
\noindent\textbf{Case 2: ($\ell\neq \ell'$).}
Define the odd integers
\begin{equation*}
q_\ell := (-1)^\ell(2\ell+1).
\end{equation*}
We first note that $q_\ell\equiv 1 \pmod 4$ for all $\ell$.
Hence $q_\ell-q_{\ell'}$ is divisible by $4$, so for some $m\in\mathbb{Z}\setminus\{0\},$
\begin{equation*}
\Delta=\omega_\ell-\omega_{\ell'}
=\frac{(q_\ell-q_{\ell'})\pi}{2p}
=\frac{2m\pi}{p}.
\end{equation*}
Consider the geometric sum
\begin{equation*}
S:=\sum_{r=-N+1}^{N}e^{{\sqrt{-1}}r\Delta}.
\end{equation*}
Since $2N=p$, we have $e^{{\sqrt{-1}}p\Delta}=e^{{\sqrt{-1}}2m\pi}=1$, while $\Delta\not\equiv 0\ (\mathrm{mod}\ 2\pi)$ for $\ell\neq \ell'$,
so
\begin{equation*}
\begin{aligned}
S
&=e^{-{\sqrt{-1}}(N-1)\Delta}\frac{1-e^{{\sqrt{-1}}p\Delta}}{1-e^{{\sqrt{-1}}\Delta}}
=0.
\end{aligned}
\end{equation*}
Taking real parts gives
\begin{equation*}
0=Re(S)=1+\cos(N\Delta)+2\sum_{r=1}^{N-1}\cos(r\Delta).
\end{equation*}
It remains to show $\cos(N\Sigma)=0$. Note that $q_\ell+q_{\ell'}\equiv 2\ (\mathrm{mod}\ 4)$, so $(q_\ell+q_{\ell'})/2$ is odd and
\begin{equation*}
\begin{aligned}
N\Sigma
&=\frac{p}{2}\cdot\frac{(q_\ell+q_{\ell'})\pi}{2p}
=\frac{(q_\ell+q_{\ell'})\pi}{4}
=\left(\text{odd}\right)\frac{\pi}{2},
\end{aligned}
\end{equation*}
hence $\cos(N\Sigma)=0$. Therefore, for $\ell\neq \ell'$,
\begin{equation*}
\begin{aligned}
&(D^\top D)_{\ell,\ell'}\\
=&\frac{1}{p}\Bigl(\underbrace{1+\cos(N\Delta)+2\sum_{r=1}^{N-1}\cos(r\Delta)}_{=\,0}
+\underbrace{\cos(N\Sigma)}_{=\,0}\Bigr)\\=&0.
\end{aligned}
\end{equation*}
Combining Steps 1 and 2 yields $D^\top D=I$.
\end{proof}

\subsection[Proof of v(alpha j)=v j]{Proof of ${\mathbf v}(\alpha_j)=\mathbf{v}_j$}
\begin{lemma}\label{lemmaB2}
Let $D\in \bb R^{p\times p}$ be given as in Lemma~\ref{lemmadeferredproofs}.
Let $\alpha_j := \exp\!\left(J\cdot2\pi\sqrt{-1} \right)$ for $J:=\tfrac{(-1)^j(2j+1)}{4p}$.
Let
\begin{equation*}\textstyle
{\bf v}(\alpha) := { \sqrt{\tfrac{2}{p}} 
\Big(Re(\alpha, \dots, \alpha^{\tfrac{p}{2}}), 
Im(\alpha, \dots, \alpha^{\tfrac{p}{2}-1}), 
\tfrac{1}{\sqrt{2}}\Big)^\top}.
\end{equation*}
Then, $j$th column of $D$ is equal to $\mathbf v(\alpha_j)$.
\end{lemma}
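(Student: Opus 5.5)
This is a direct entrywise comparison; no analysis is needed beyond de Moivre's formula. Throughout, I will write $\theta_j := 2\pi J = \tfrac{(-1)^j(2j+1)\pi}{2p}$. This is exactly the angle $\omega_j$ used in the proof of Lemma~\ref{lemmadeferredproofs}. With this notation $\alpha_j = e^{\sqrt{-1}\,\theta_j}$. Since $\theta_j$ is real, $|\alpha_j|=1$, and for every integer $r\ge 1$ de Moivre gives
\[
\alpha_j^r = \cos(r\theta_j) + \sqrt{-1}\,\sin(r\theta_j),
\]
so that $Re(\alpha_j^r)=\cos(r\theta_j)$ and $Im(\alpha_j^r)=\sin(r\theta_j)$.

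Next I will list the $p$ coordinates of ${\bf v}(\alpha_j)$ with zero-based index $k$, reading off the definition in \eqref{eq:valpha}. The coordinates fall into three blocks, which are exactly the three cases in the definition of $D$.
\begin{itemize}
\item For $0\le k<\tfrac p2$, the entry is $\sqrt{2/p}\,Re(\alpha_j^{k+1})$. By the identity above this equals $\sqrt{2/p}\cos((k+1)\theta_j)$, which is $D_{k,j}$ in the first case.
\item For $\tfrac p2\le k<p-1$, the entry is the $(k-\tfrac p2)$-th element of the imaginary block, namely $\sqrt{2/p}\,Im(\alpha_j^{k-p/2+1})$. This equals $\sqrt{2/p}\sin((k-\tfrac p2+1)\theta_j)$, which is $D_{k,j}$ in the second case.
\item For $k=p-1$, the entry is $\sqrt{2/p}\cdot\tfrac{1}{\sqrt 2}=\tfrac{1}{\sqrt p}$, which matches the third case.
\end{itemize}
Agreement in all three blocks gives ${\bf v}(\alpha_j)=D_{:,j}$.

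The only step that needs care is the index bookkeeping. The first block contains $p/2$ real parts, for exponents $1$ through $p/2$. The second block contains $p/2-1$ imaginary parts, for exponents $1$ through $p/2-1$. The final block is the single constant. Together these give $p/2+(p/2-1)+1=p$ coordinates, and I will check that this agrees with the row ranges of $D$. I will also check that the offset $k-\tfrac p2+1$ in the sine case of $D$ matches the exponent of the $(k-\tfrac p2)$-th imaginary entry. Once this alignment is verified, the lemma is immediate.
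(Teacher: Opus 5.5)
Your proposal is correct and follows essentially the same route as the paper: set $\omega_j = 2\pi J$, use de Moivre to get $Re(\alpha_j^r)=\cos(r\omega_j)$ and $Im(\alpha_j^r)=\sin(r\omega_j)$, then match the cosine block, the sine block with offset $r=k-\tfrac{p}{2}+1$, and the constant last entry against the three cases defining $D$. Your index bookkeeping ($p/2$ real parts, $p/2-1$ imaginary parts, one constant) agrees with the paper's.
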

\begin{proof}
Fix $j\in\{0,1,\dots,p-1\}$ and define
\[
J:=\frac{(-1)^j(2j+1)}{4p},\qquad \alpha_j:=\exp\!\bigl(2\pi i J\bigr).
\]
Then
\begin{align*}
    \alpha_j=&\exp\!\left({\sqrt{-1}}\,\frac{(-1)^j(2j+1)\pi}{2p}\right)
=:e^{{\sqrt{-1}}\omega_j},\\
\omega_j:=&\frac{(-1)^j(2j+1)\pi}{2p},
\end{align*}
so that
\[
Re(\alpha_j^r)=\cos(r\omega_j),\;
Im(\alpha_j^r)=\sin(r\omega_j).
\]

We now compare the entries of the vector ${\bf v}(\alpha_j)$ in \eqref{eq:valpha}
with the $j$th column of $D$.

\par\addvspace{\smallskipamount}
\noindent\textbf{(i) Cosine block.}
For $0\le k<\tfrac{p}{2}$, set $r=k+1\in\{1,\dots,\tfrac{p}{2}\}$. Then the $k$th entry of
${\bf v}(\alpha_j)$ equals
\begin{equation*}\resizebox{\columnwidth}{!}{$
    \begin{aligned}
        {\bf v}(\alpha_j)_k=\sqrt{\frac{2}{p}}Re(\alpha_j^{r})
=\sqrt{\frac{2}{p}}\cos(\omega_j r)
=\sqrt{\frac{2}{p}}\cos\!\bigl(\omega_j (k+1)\bigr),
    \end{aligned}
    $}
\end{equation*}
which coincides with $D_{k,j}$ by the definition of $D$ (first case).

\par\addvspace{\smallskipamount}
\noindent\textbf{(ii) Sine block.}
For $\tfrac{p}{2}\le k< p-1$, set $r=k-\tfrac{p}{2}+1\in\{1,\dots,\tfrac{p}{2}-1\}$.  
Then the $k$th entry of ${\bf v}(\alpha_j)$ equals
\begin{equation*}\resizebox{\columnwidth}{!}{$
    \begin{aligned}
{\bf v}(\alpha_j)_k=\sqrt{\frac{2}{p}}Im(\alpha_j^{r})
=\sqrt{\frac{2}{p}}\sin(\omega_j r)
=\sqrt{\frac{2}{p}}\sin\!\Bigl(\omega_j\bigl(k-\tfrac{p}{2}+1\bigr)\Bigr),
    \end{aligned}
    $}
\end{equation*}
which coincides with $D_{k,j}$ by the definition of $D$ (second case).

\par\addvspace{\smallskipamount}
\noindent\textbf{(iii) Last entry.}
Finally, the last component of ${\bf v}(\alpha_j)$ is
\[
{\bf v}(\alpha_j)_{p-1}
=\sqrt{\frac{2}{p}}\cdot\frac{1}{\sqrt{2}}
=\frac{1}{\sqrt{p}}
=D_{p-1,j},
\]
matching the third case in the definition of $D$.
Since all entries agree, the $j$th column of $D$ is exactly ${\bf v}(\alpha_j)$.
\end{proof}

    

\section{Numerical Analysis}~\label{Appendix:MainThm}
We separate the total deviation into the error due to approximate evaluation and the error induced by input perturbation.
Let $f$ be the target function and let $\tilde f$ be a circuit (or procedure) that computes an approximation of $f$.
Given an input perturbation from $x$ to $\tilde x$, we write
\begin{equation}\label{eq:error-decomp}
\tilde f(\tilde x)-f(x)
=
\underbrace{\tilde f(\tilde x)-f(\tilde x)}_{\text{forward error}}
+
\underbrace{f(\tilde x)-f(x)}_{\text{backward error}}.
\end{equation}
The first term captures the approximation error incurred even on the same input $\tilde x$, whereas the second term captures the sensitivity of $f$ to the input perturbation.

When $f$ is differentiable, the backward error admits the first-order expansion
\begin{equation}\label{eq:taylor}
f(\tilde x)-f(x)=Df(x)\cdot(\tilde x-x)+o(\|\tilde x-x\|).
\end{equation}
Equivalently, if $f$ is locally Lipschitz around $x$ with constant $L$, then $|f(\tilde x)-f(x)|\le L\|\tilde x-x\|$.

We now instantiate this decomposition for CKKS.
CKKS supports fixed-point arithmetic via a scaling factor $\Delta$.
The per-slot precision is roughly $\log \Delta - \frac{1}{2}\log(hN)-O(1)$ bits; correspondingly, each basic arithmetic step (notably ciphertext multiplication followed by rescaling) introduces an absolute error of such order in the decoded (slot) value.
While the exact constant depends on parameters and implementation details, this model provides a convenient conservative abstraction for tracking accumulated approximation error along a circuit.
In particular, a coefficient-domain Gaussian error with coefficient standard deviation $\sigma$ is measured in the slot domain through its canonical embedding.
Following the standard CKKS noise analysis~\cite{CKKS}, we use the high-probability bound $6\sigma\sqrt N$ on this canonical-embedding norm.

With this convention, we proceed to the proof of the theorem.

\begin{theorem}\label{thm:app-admissible-params}
Fix a correctness parameter $u$, an even index size $p$ satisfying the assumptions of Section~\ref{IVE} (or its padded size), an embedding dimension $d$, and a CKKS ring degree $N$.
Let $\Delta$ be the CKKS scaling factor.
Let $C_{\sf PCMM}$ satisfy $\|\epsilon_{\sf PCMM}\|_{\sf RMS}\le C_{\sf PCMM}\Delta^{-1}$ for the final PCMM step, and let $C_T$ satisfy $\|M\|_2\le C_T(\sqrt d+\sqrt p)$ for tables $M\in\mathcal T$.
Let $C_V$ be a local Lipschitz constant satisfying $\|{\bf v}(\alpha+\epsilon)-{\bf v}(\alpha)\|_2\le C_V\frac{p+2}{\sqrt{12}}|\epsilon|$ in the perturbation range considered below.
Let $C_{\rm coeff}=19.2$ denote the coefficient-domain 6-sigma bound for Gaussian error of standard deviation $\sigma=3.2$, and set $C_0:=C_{\rm coeff}\sqrt N$ for the corresponding canonical-embedding bound.
Assume that after ring packing, the value entering IVE is $\tilde\alpha_i=\alpha_{j_i}+\epsilon_{\alpha,i}$ with
\[
\left|\epsilon_{\alpha,i}\right|
\le
\frac{C_0}{\Delta}.
\]
If
\begin{equation}\label{eq:app-sufficient-condition}
\frac{C_{\sf PCMM}}{\Delta}
+
\frac{C_T(\sqrt d+\sqrt p)}{\sqrt d}
\cdot
C_V\frac{p+2}{\sqrt{12}}
\cdot
\frac{C_0}{\Delta}
\le
\frac{1}{u},
\end{equation}
then \textsf{IVE-PEL} returns encryptions of $M_{j_i}$, for every packed index $j_i$, with RMS error at most $1/u$, with overwhelming probability over the protocol randomness under the stated CKKS error bounds.
\end{theorem}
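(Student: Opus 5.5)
We follow the forward/backward decomposition~\eqref{eq:error-decomp}, with the target map $f(\alpha)=(ML/\sqrt{2p})\cdot\sqrt{2p}\,{\bf v}(\alpha)=ML\,{\bf v}(\alpha)$, so that $f(\alpha_{j_i})=M_{j_i}$ by Lemma~\ref{lemmaB2} and $L=D^{-1}$. For a fixed slot $i$, the decrypted output $\tilde M_{j_i}$ of $({\sf PCMM}_{ML/\sqrt{2p}}\circ{\sf IVE})(\ct)$ is split as in~\eqref{align:Errorbound}:
\[
\tilde M_{j_i}-M_{j_i}=\bigl(\tilde M_{j_i}-ML\,\tilde{\bf v}_{j_i}\bigr)+ML\bigl(\tilde{\bf v}_{j_i}-{\bf v}(\alpha_{j_i})\bigr).
\]
We bound the RMS norm (over the $d$ output coordinates) of each term separately and add them via the triangle inequality.

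\textbf{Forward term.} All homomorphic errors after the IVE input (power tree, conjugation, and the PCMM itself) are charged to the final PCMM step. By hypothesis, this contributes at most $C_{\sf PCMM}/\Delta$ in RMS norm. Strictly, this term should also absorb the small multiplication and rescaling errors of the power tree, which are $O(\sqrt{hN}/\Delta)$ per level. We fold these into $C_{\sf PCMM}$, or equivalently read $\tilde{\bf v}$ as ${\bf v}(\tilde\alpha)$ evaluated exactly, so that the remaining error is purely backward. This reading is the modeling convention of the statement, and I will state it explicitly.

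\textbf{Backward term.} With $\tilde{\bf v}={\bf v}(\tilde\alpha_i)$, we use the local Lipschitz hypothesis together with $|\epsilon_{\alpha,i}|\le C_0/\Delta$ to get
\[
\|{\bf v}(\tilde\alpha_i)-{\bf v}(\alpha_{j_i})\|_2\le C_V\tfrac{p+2}{\sqrt{12}}\cdot\tfrac{C_0}{\Delta}.
\]
By Lemma~\ref{lemmadeferredproofs}, $L$ is orthogonal, so $\|L\|_2=1$, and the table hypothesis gives $\|M\|_2\le C_T(\sqrt d+\sqrt p)$. Hence $\|ML\epsilon_{\bf v}\|_2\le C_T(\sqrt d+\sqrt p)\,C_V\tfrac{p+2}{\sqrt{12}}\tfrac{C_0}{\Delta}$. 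Converting the $\ell_2$ norm to the RMS norm over $d$ coordinates divides by $\sqrt d$, which reproduces exactly the second summand of~\eqref{eq:app-sufficient-condition}. Summing the two terms and applying~\eqref{eq:app-sufficient-condition} yields an RMS error of at most $1/u$ for slot $i$.

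\textbf{Probability.} The input bound $|\epsilon_{\alpha,i}|\le C_0/\Delta$ is the $6\sigma$ canonical-embedding bound on the fresh encryption error. Ring packing via $\iota$ and the multiplication by $1-X^{N/2+1}$ is noise-free up to this factor, so the bound holds except with Gaussian-tail probability; the same applies to the PCMM bound. A union bound over the $N/2$ slots and over the finitely many CKKS error events gives the claim ``with overwhelming probability under the stated CKKS error bounds.''

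\textbf{Main obstacle.} The delicate point is justifying that the exponent perturbation stays inside the region where the local Lipschitz constant $C_V$ is valid, and that intermediate tree errors do not grow with the powers $\alpha^k$ for $k\le p/2$. Since $|\alpha|=1$, a perturbation of size $\epsilon$ in $\alpha$ perturbs $\alpha^k$ by roughly $k\epsilon$. Summing the squares over the $p-1$ nonconstant coordinates gives $\sqrt{\sum_k k^2}\approx p^{3/2}/\sqrt{3}$, which does not match the factor $\tfrac{p+2}{\sqrt{12}}$ unless $C_V$ absorbs a further $\sqrt p$-type factor. My first attempt is therefore to treat $C_V\tfrac{p+2}{\sqrt{12}}$ purely as the assumed Lipschitz constant, as the statement permits, and only to remark that $C_0/\Delta\ll 1/p$ keeps the perturbation in the linear regime.
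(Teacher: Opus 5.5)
Your argument is correct and is essentially the paper's proof. Both use the same split into the PCMM forward error $\tilde M_j-ML\tilde{\bf v}_j$ plus $ML\epsilon_{{\bf v}_j}$, the identity $\|ML\|_2=\|M\|_2$ from orthogonality of $L$, division by $\sqrt d$ to pass to the RMS norm, and the Lipschitz bound combined with $|\epsilon_\alpha|\le C_0/\Delta$. The paper also silently identifies $\tilde{\bf v}_j$ with ${\bf v}(\tilde\alpha)$, which you state explicitly as a convention. The only real difference is that the paper derives the Lipschitz bound in Lemma~\ref{lem:v-deriv} from the per-power estimate $|(\alpha+\epsilon)^k-\alpha^k|\le C_Vk|\epsilon|$, whereas you take it as given, which the statement permits.

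The mismatch you raise in your last paragraph is not real. You dropped the $\sqrt{2/p}$ prefactor of ${\bf v}$. You also counted the real and imaginary parts of $\alpha^k$ as separate contributions, although together they give only $|(\alpha+\epsilon)^k-\alpha^k|^2$, and $k$ runs only up to $p/2$. Correctly, $\frac{2}{p}\sum_{k=1}^{p/2}k^2=\frac{(p+1)(p+2)}{12}\le\frac{(p+2)^2}{12}$, so $C_V$ does not need to absorb any extra $\sqrt p$ factor.
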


\begin{proof}
We track one packed index $j_i$ and omit the subscript $i$.
The same argument applies independently to every packed slot.
All bounds below are conditioned on the high-probability CKKS error events stated in the theorem.
All tildes denote values computed from approximate arithmetic and/or perturbed inputs.

\begin{enumerate}[leftmargin=*, itemsep=-2pt, topsep=0pt]
    \item The server reconstructs $(a,b)$, applies ring packing, and obtains a ciphertext $\ct\in\cc R_{N,Q_0}^2$ whose decoded IVE input is $\tilde\alpha=\alpha+\epsilon_\alpha$.
    \item The server runs IVE to obtain encryptions of $\sqrt{2p}\,\tilde{\bf v}_j$.
    After the following multiplication by $ML/\sqrt{2p}$, we write $\epsilon_{{\bf v}_j}:=\tilde{\bf v}_j-{\bf v}_j$ for the corresponding unscaled vector error.
    \item The server runs ${\sf PCMM}_{ML/\sqrt{2p}}$ and obtains encryptions of $\tilde M_j$.
\end{enumerate}

\subsection*{Step 1: Bounding $\epsilon_{M_j}$ from $\epsilon_{{\bf v}_j}$}
Write the PCMM forward error as
\[
\epsilon_{\sf PCMM}:=\tilde M_j-ML\tilde{\bf v}_j,
\qquad
\|\epsilon_{\sf PCMM}\|_{\sf RMS}\le \frac{C_{\sf PCMM}}{\Delta}.
\]
Since $L$ is orthogonal, $\|ML\|_2=\|M\|_2$.
Using the table bound $\|M\|_2\le C_T(\sqrt d+\sqrt p)$, we get
\begin{align}
\|\tilde M_j-M_j\|_{\sf RMS}
&\le
\|\epsilon_{\sf PCMM}\|_{\sf RMS}
+
\frac{1}{\sqrt d}\|ML(\tilde{\bf v}_j-{\bf v}_j)\|_2 \nonumber\\
&\le
\frac{C_{\sf PCMM}}{\Delta}
+
\frac{C_T(\sqrt d+\sqrt p)}{\sqrt d}
\|\epsilon_{{\bf v}_j}\|_2 .
\label{eq:app-m-from-v}
\end{align}

\subsection*{Step 2: Bounding $\epsilon_{{\bf v}_j}$ from $\epsilon_{\alpha}$}
We use the following local Lipschitz bound for the IVE vector map.

\begin{lemma}\label{lem:v-deriv}
Let ${\bf v}(\alpha)$ be defined by
\[
{\bf v}(\alpha)
:= \sqrt{\tfrac{2}{p}}
\Big(Re(\alpha,\dots,\alpha^{p/2}),
Im(\alpha,\dots,\alpha^{p/2-1}),
\tfrac{1}{\sqrt2}\Big)^\top.
\]
Suppose $|\alpha|=1$ and $\epsilon$ lies in the perturbation range in which
\[
|(\alpha+\epsilon)^k-\alpha^k|\le C_V k|\epsilon|
\qquad(1\le k\le p/2)
\]
for an absolute constant $C_V$.
Then
\[
\|{\bf v}(\alpha+\epsilon)-{\bf v}(\alpha)\|_2
\le
C_V\frac{p+2}{\sqrt{12}}|\epsilon|.
\]
\end{lemma}

\begin{proof}[Proof of Lemma~\ref{lem:v-deriv}]
For each $1\le k\le p/2-1$, the real and imaginary coordinates together contribute at most
$|(\alpha+\epsilon)^k-\alpha^k|^2$.
The last real coordinate for $k=p/2$ contributes at most the same complex-power difference.
Therefore,
\begin{align*}
\|{\bf v}(\alpha+\epsilon)-{\bf v}(\alpha)\|_2^2
&\le
\frac{2}{p}C_V^2|\epsilon|^2\sum_{k=1}^{p/2}k^2\\
&=
C_V^2|\epsilon|^2\frac{(p+1)(p+2)}{12}\\
&\le
C_V^2|\epsilon|^2\frac{(p+2)^2}{12}.
\end{align*}
Taking square roots proves the claim.
\end{proof}

By Lemma~\ref{lem:v-deriv}, with $\epsilon_\alpha:=\tilde\alpha-\alpha$,
\begin{equation}\label{eq:app-v-from-alpha}
\|\epsilon_{{\bf v}_j}\|_2
\le
C_V\frac{p+2}{\sqrt{12}}|\epsilon_\alpha|.
\end{equation}

\subsection*{Step 3: Bounding the initial $\epsilon_{\alpha}$}
By the theorem assumption, the packed input satisfies
\[
\tilde\alpha=\alpha+\epsilon_\alpha,
\qquad
|\epsilon_\alpha|
\le
\frac{C_0}{\Delta}.
\]

Combining~\eqref{eq:app-m-from-v} and \eqref{eq:app-v-from-alpha} with this initial bound, we get
\[
\|\tilde M_j-M_j\|_{\sf RMS}
\le
\frac{C_{\sf PCMM}}{\Delta}
+
\frac{C_T(\sqrt d+\sqrt p)}{\sqrt d}
\cdot
C_V\frac{p+2}{\sqrt{12}}
\cdot
\frac{C_0}{\Delta}.
\]
The sufficient condition~\eqref{eq:app-sufficient-condition} makes the right-hand side at most $1/u$.
\end{proof}

The theorem gives the following parameter consequences.
Define
\[
S(\Delta)
:=
\left(\frac{\Delta}{u}-C_{\sf PCMM}\right)
\frac{\sqrt d}{C_V C_T(\sqrt d+\sqrt p)}
\frac{\sqrt{12}}{p+2}.
\]
For $\Delta/u>C_{\sf PCMM}$, condition~\eqref{eq:app-sufficient-condition} is equivalent to
\begin{equation}\label{eq:main-cond-S}
C_0\le S(\Delta).
\end{equation}

\begin{corollary}[Scale selection]
It suffices to choose
\begin{equation}\label{eq:Delta-min}
\Delta
\ge
uC_{\sf PCMM}
+
u\cdot
\frac{C_V C_T(\sqrt d+\sqrt p)}{\sqrt d}\cdot
\frac{p+2}{\sqrt{12}}\cdot
C_0.
\end{equation}
This is exactly condition~\eqref{eq:main-cond-S} written as a lower bound on $\Delta$.
\end{corollary}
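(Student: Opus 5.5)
The statement to prove is the Corollary (scale selection). It says that choosing $\Delta$ at least the displayed lower bound suffices, and that this bound is exactly condition~\eqref{eq:main-cond-S} rewritten. The plan is to show that the bound implies the sufficient condition~\eqref{eq:app-sufficient-condition} of Theorem~\ref{thm:app-admissible-params}, and then invoke that theorem.

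Abbreviate
\[
K:=\frac{C_V C_T(\sqrt d+\sqrt p)}{\sqrt d}\cdot\frac{p+2}{\sqrt{12}}>0 .
\]
All constants $C_V,C_T,C_0,u,d,p$ are positive, so $K>0$. With this notation, condition~\eqref{eq:app-sufficient-condition} reads
\[
\frac{C_{\sf PCMM}+K C_0}{\Delta}\le \frac1u .
\]
Since $\Delta>0$ and $u>0$, multiplying through by $u\Delta$ shows this is equivalent to
\[
\Delta\ge uC_{\sf PCMM}+uKC_0 ,
\]
which is precisely~\eqref{eq:Delta-min}.

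To connect with~\eqref{eq:main-cond-S}, note first that~\eqref{eq:Delta-min} forces $\Delta/u\ge C_{\sf PCMM}+KC_0>C_{\sf PCMM}$, because $C_0>0$. Hence the hypothesis $\Delta/u>C_{\sf PCMM}$, under which~\eqref{eq:main-cond-S} is stated, holds automatically. Rearranging $\Delta/u-C_{\sf PCMM}\ge KC_0$ and dividing by $K$ gives $C_0\le(\Delta/u-C_{\sf PCMM})/K=S(\Delta)$. This confirms that~\eqref{eq:Delta-min} is~\eqref{eq:main-cond-S} written as a lower bound on $\Delta$.

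Finally, invoke Theorem~\ref{thm:app-admissible-params}. Its remaining hypotheses (the PCMM error bound, the table norm bound, the local Lipschitz constant and the ring-packed input error $C_0/\Delta$) are the standing assumptions of the error model, so it yields RMS error at most $1/u$ with overwhelming probability. There is no real obstacle. The only care needed is sign bookkeeping: all constants must be positive so that the inequality directions are preserved when multiplying or dividing by $u$, $\Delta$ and $K$.
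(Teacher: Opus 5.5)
Your proposal is correct and follows the paper's own reasoning: the corollary is just the sufficient condition of Theorem~\ref{thm:app-admissible-params}, equivalently $C_0\le S(\Delta)$, rearranged as a lower bound on $\Delta$, and that rearrangement is exactly the algebra you carry out. You also note that the bound automatically gives $\Delta/u>C_{\sf PCMM}$ because $C_0>0$, which is a small detail the paper leaves implicit.
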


\begin{corollary}[Asymptotic parameter relation]\label{THMCOR}
Assume $C_{\sf PCMM}=o(\Delta/u)$ and choose $\Delta$ on the scale of~\eqref{eq:Delta-min}.
Then
\begin{equation}\label{eq:logDelta-scale}
\begin{aligned}
\log\Delta
&=
\log\!\Bigl(C_V C_T\cdot u\cdot C_0\cdot(p+2)\cdot(1+\sqrt{p/d})\Bigr)+O(1)\\
&=
\log\!\Bigl(u\cdot C_0\cdot p\cdot(1+\sqrt{p/d})\Bigr)+O(1),
\end{aligned}
\end{equation}
where the final $O(1)$ absorbs absolute constants and table-dependent constants.
\end{corollary}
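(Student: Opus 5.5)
The plan is to read the corollary as a direct consequence of the scale-selection corollary, i.e.\ of~\eqref{eq:Delta-min}, by unpacking the right-hand side of that bound and taking logarithms. I will interpret ``choose $\Delta$ on the scale of~\eqref{eq:Delta-min}'' concretely as
\[
uC_{\sf PCMM}+K\;\le\;\Delta\;\le\; c\,(uC_{\sf PCMM}+K)
\]
for some absolute constant $c\ge 1$. Here
\[
K:=u\cdot\frac{C_VC_T(\sqrt d+\sqrt p)}{\sqrt d}\cdot\frac{p+2}{\sqrt{12}}\cdot C_0 .
\]
The first step is the algebraic identity $(\sqrt d+\sqrt p)/\sqrt d=1+\sqrt{p/d}$. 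With it, $K=\frac{1}{\sqrt{12}}\,C_VC_T\,u\,C_0\,(p+2)\,(1+\sqrt{p/d})$.

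The second step is to remove the $uC_{\sf PCMM}$ term using the hypothesis $C_{\sf PCMM}=o(\Delta/u)$. This step is the only delicate point, because the hypothesis refers to $\Delta$ itself and so looks circular. I would handle it by fixing a regime in which $uC_{\sf PCMM}\le \eta\Delta$ for some $\eta<1$ bounded away from $1$ (for instance $\eta=1/2$ eventually).
\begin{itemize}
\item From the lower bound $\Delta\ge K$ we get $\log\Delta\ge\log K$.
\item From the upper bound, $\Delta\le c(\eta\Delta+K)$. This gives $\Delta\le cK/(1-c\eta)$ whenever $c\eta<1$; if $c$ is large, I would take the ``scale'' to mean $c=1+o(1)$, so that this condition holds.
\end{itemize}
Together these give $\log\Delta=\log K+O(1)$. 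The $o(\cdot)$ in fact makes the additive term $o(1)$, but $O(1)$ is all that is claimed.

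The third step is to take logarithms:
\[
\log K=\log\!\bigl(C_VC_T\cdot u\cdot C_0\cdot(p+2)\cdot(1+\sqrt{p/d})\bigr)-\tfrac12\log 12 .
\]
The constant $-\tfrac12\log 12$ is absorbed into $O(1)$, which yields the first line of~\eqref{eq:logDelta-scale}. For the second line, I would use that $p$ is even, so $p\ge 2$ and hence $p\le p+2\le 2p$. This gives $\log(p+2)=\log p+O(1)$. Finally, $C_V$ is an absolute constant by Lemma~\ref{lem:v-deriv}, and $C_T$ is a table-dependent constant, so $\log(C_VC_T)$ is also absorbed into the $O(1)$, as the statement explicitly permits.
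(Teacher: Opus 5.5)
Your proposal is correct and follows the same route the paper implicitly takes. The paper gives no separate proof and treats the corollary as immediate from~\eqref{eq:Delta-min}: drop the $uC_{\sf PCMM}$ term using $C_{\sf PCMM}=o(\Delta/u)$, rewrite $(\sqrt d+\sqrt p)/\sqrt d=1+\sqrt{p/d}$, take logarithms, and absorb $\sqrt{12}$, $C_VC_T$, and the $p+2$ versus $p$ gap into $O(1)$, exactly as you do. One simplification: because the hypothesis is little-$o$, your $\eta$ tends to $0$, so $c\eta<1$ holds eventually for any fixed $c$ and you need not require $c=1+o(1)$; for fixed $c>1$, however, $\log\Delta-\log K$ is $\log c+o(1)$, so it is $O(1)$ rather than the $o(1)$ your aside claims.
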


\begin{corollary}[Query-size accounting]
At modulus $Q_0=\Delta$, the query consists of a 128-bit seed and the $b$ component.
Sending one encrypted scalar query slot for each of the $N/2$ indices costs
\[
128+\frac{N}{2}\log Q_0
\]
bits.
Therefore the amortized multiplicative overhead relative to the $\log p$-bit plaintext index representation is
\[
\frac{128+\frac{N}{2}\log Q_0}{\frac{N}{2}\log p}
=
\frac{\log Q_0}{\log p}
+
\frac{256}{N\log p}.
\]
With $Q_0=\Delta$ and the asymptotic relation~\eqref{eq:logDelta-scale}, this becomes
\[
1+\frac{\log\!\bigl(uC_0(1+\sqrt{p/d})\bigr)}{\log p}
+O\!\left(\frac{1}{\log p}\right)
+\frac{256}{N\log p}.
\]
\end{corollary}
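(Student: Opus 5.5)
The statement is the query-size accounting corollary, and the plan is to count bits directly from the ${\sf Query}$ algorithm of Figure~\ref{fig:IVE-PEL Protocol} and then substitute the scale relation of Corollary~\ref{THMCOR}.

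\textbf{Step 1 (raw size).} The query is ${\sf q}=(\rho,b)$. Here $\rho$ is a 128-bit seed from which the server regenerates $a$, so $a$ itself is never sent. The component $b$ is an element of $\cc R_{N/2,Q_0}$. It therefore has $N/2$ coefficients, each an element of $\bb Z_{Q_0}$ costing $\log Q_0$ bits. The total is $128+\tfrac{N}{2}\log Q_0$ bits, and this single query carries all $N/2$ packed indices.

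\textbf{Step 2 (normalize).} The plaintext representation of $N/2$ indices in $[p]$ costs $\tfrac{N}{2}\log p$ bits. Dividing the raw size by this and splitting the fraction gives
\[
\frac{\log Q_0}{\log p}+\frac{128}{\tfrac{N}{2}\log p}=\frac{\log Q_0}{\log p}+\frac{256}{N\log p}.
\]

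\textbf{Step 3 (substitute the scale).} Set $Q_0=\Delta$, as required for ${\sf ExpBTS}$. Corollary~\ref{THMCOR} gives $\log\Delta=\log p+\log\bigl(uC_0(1+\sqrt{p/d})\bigr)+O(1)$, using $\log(up\,C_0(1+\sqrt{p/d}))=\log p+\log(uC_0(1+\sqrt{p/d}))$. Dividing by $\log p$ yields
\[
1+\frac{\log\bigl(uC_0(1+\sqrt{p/d})\bigr)}{\log p}+O\!\left(\frac{1}{\log p}\right).
\]
Adding the seed term $\frac{256}{N\log p}$ from Step 2 gives the stated expression.

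\textbf{Main obstacle.} The arithmetic is routine; the only delicate point is justifying that $\log Q_0$ bits per coefficient is the right count. Coefficients of $b$ lie in $\bb Z_{Q_0}$, so $\lceil\log Q_0\rceil$ bits suffice, and the ceiling only shifts the overhead by $O(1/\log p)$, which is absorbed.

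Two further points need a brief remark. First, the ring-packing step happens on the server and adds no communication. Second, the $O(1)$ in Corollary~\ref{THMCOR} becomes $O(1/\log p)$ after division, since $C_V$, $C_T$ and the other absolute constants are fixed.
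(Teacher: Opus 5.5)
Your proposal is correct and follows the same route as the paper: count $128+\frac{N}{2}\log Q_0$ bits for the seed and the $b$ component, divide by $\frac{N}{2}\log p$, and substitute $\log\Delta=\log p+\log\bigl(uC_0(1+\sqrt{p/d})\bigr)+O(1)$ from Corollary~\ref{THMCOR}. Your extra remarks, that the ceiling on $\log Q_0$ and the $O(1)$ constants are absorbed into $O(1/\log p)$ and that ring packing is server-side and adds no communication, are sound refinements that the paper leaves implicit.
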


\section{Training Details}~\label{Appendix:Experiment}

In this appendix, we provide additional preprocessing and training details used in our
implementation.

\subsection{Implementation 1. Comparison of PEL Method}
There are two separate training stages in the experiments of Section~\ref{Section: Implementation1}.
First, we train compressed embedding tables from the source embeddings (GloVe.42B.300d, GloVe.6B.50d, GPT-2).
Second, we train an IMDB sentiment classifier using the compressed embeddings obtained in the first stage to evaluate downstream utility.


\paragraph{Compressed Embedding Training}
The table-compression structure is described in
Section~\ref{Section: Implementation1}.
Here we give the training details.
For each source embedding table, we train the compressed subtables using the
implementation of~\cite{Github}.
The training objective is the mean-squared error between the original embedding
and its reconstructed compressed embedding.
We use Adam with learning rate $10^{-4}$, batch size $128$, and train for
$20$ epochs.
We save the checkpoint with the lowest validation MSE and use this checkpoint
to generate the compressed subtables and the learned index tuples.
In all cases, we fix the number of subtables to $\ell=4$ and vary the subtable
size $p$.
Thus, the total compressed table size is $\ell p \times d = 4p \times d$.



\paragraph{IMDB Utility Evaluation}
To evaluate whether the compressed embeddings retain downstream utility, we
train a lightweight LSTM classifier on the IMDB sentiment classification
task~\cite{IMDB}.
For GloVe-based experiments, the compressed GloVe embeddings are used as
pretrained word embeddings for the classifier.
The classifier consists of a single LSTM layer with hidden dimension $150$,
followed by a linear classification layer.
We train the classifier with Adam, learning rate $10^{-4}$, batch size $128$,
and negative log-likelihood loss.
For GloVe.42B.300d, we train the classifier for $100$ epochs.
For GloVe.6B.50d, we use the same learning rate and train the classifier for
$300$ epochs.

For the IMDB preprocessing, we use the Keras IMDB dataset with maximum sequence
length $400$.
We build the classifier vocabulary from the intersection of the IMDB word
vocabulary and the corresponding GloVe vocabulary.
Words not covered by the GloVe vocabulary are mapped to \texttt{<unk>}, and
shorter sequences are padded to length $400$.
The IMDB training split is further divided into training and validation subsets:
we randomly sample $3$ batches of training examples for validation and use the
remaining examples for training.
We save the checkpoint with the lowest validation loss and report the test
accuracy of this checkpoint.
All reported IMDB accuracies are averaged over three runs.

We do not report IMDB accuracy for GPT-2 embeddings.
In our preprocessing pipeline, only about $5.4\%$ of IMDB words overlap with the
GPT-2 token vocabulary.
Consequently, more than $90\%$ of IMDB words are mapped to the unknown token,
making it difficult for the classifier to learn meaningful representations from
GPT-2 in this setting.




\subsection{Implementation 2. End-to-End FastText Inference}

\paragraph{Dataset Processing}
All datasets share a common tokenization pipeline.
We lowercase the text and extract alphabetic token sequences using the regular
expression, discarding digits, punctuation, and non-ASCII
symbols.
The vocabulary is constructed only from the training split to avoid label
leakage, and out-of-vocabulary tokens are mapped to a trainable \texttt{<unk>}
embedding.
Each input example is padded or truncated to $128$ tokens.

For Enron-Spam, we concatenate the email subject and message body into a single
document and binarize the label into spam and ham.
We split the 33,716 emails into training, validation, and test sets using a
stratified 80/10/10 split, resulting in 26,972 training samples, 3,372
validation samples, and 3,372 test samples.
We use unigram tokenization.

For Drugs.com Review, we use the review text and first apply HTML-entity
unescaping and whitespace normalization before tokenization.
We convert the original 1--10 rating into a binary sentiment label, where
ratings 1--5 are labeled negative and ratings 6--10 are labeled positive.
Starting from the official split of 161,297 training reviews and 53,766 test
reviews, we reserve 10\% of the official training set for validation, resulting
in 145,167 training samples, 16,130 validation samples, and 53,766 test samples.
We use unigram+bigram tokenization.
The vocabulary is built from the training split, and tokens appearing fewer than
three times are mapped to \texttt{<unk>}.

\paragraph{Plaintext Training Details}
In Implementation~2, we train a FastText-style classifier directly with
compressed embedding tables.
Unlike Implementation~1, where compressed tables are obtained by compressing
pretrained embeddings, the compressed subtables in this experiment are trained
from scratch for the target classification task.
For both datasets, we use $\ell=4$ compressed subtables, each of size $p=256$.
The compressed subtables, the mapping from each token to an $\ell$-tuple of
subtable indices, and the linear classifier are trained jointly in plaintext.

We train the model using cross-entropy loss and AdamW.
The mapping from tokens to subtable indices is discrete, but discrete index
selection is not directly differentiable.
Therefore, during training, we use a softmax relaxation: for each token and each
subtable, the model learns a probability distribution over the $p$ entries and
uses the corresponding weighted average of subtable entries.
After training, this soft assignment is converted into a hard assignment by
taking the argmax entry in each subtable.
Thus, each token is finally represented by an index tuple
$(j_1,\ldots,j_\ell)\in[p]^\ell$, which is used for encrypted inference.
The linear classifier is trained without a bias term.

For Enron-Spam, we set the maximum number of tokens per email to $128$, the
embedding dimension to $d=50$, the learning rate to $10^{-2}$, and train for
$20$ epochs.
For Drugs.com Review, we set the maximum number of tokens per review to $128$,
the embedding dimension to $d=300$, the learning rate to $7\cdot 10^{-3}$, and
train for $20$ epochs.
For Drugs.com Review, we also use code-sharpening regularization with entropy
weight $2\cdot 10^{-3}$ and confidence weight $10^{-2}$.

\section{Embedding Lookup in Smaller Queries}~\label{Appendix:SmallQueries}

In the main implementation results, we evaluate private embedding lookup in a
full-batch setting that maximizes the use of CKKS SIMD packing. Here, a batch
means the number of private token queries processed together in one ciphertext.
This setting captures a natural high-throughput regime where many token queries
are available at once, for example when a long document is processed, multiple
documents are evaluated together, or several inference requests are batched.

For a broader evaluation, we additionally consider smaller-batch settings, where
the number of packed queries is below the full SIMD capacity. This experiment
shows the performance of the proposed lookup procedure under different degrees
of SIMD utilization. We report these additional results in this appendix.

Unlike the main experiments, which use plaintext-ciphertext matrix-matrix
multiplication (PCMM) to process many generated vectors simultaneously, this
experiment implements the linear step using plaintext-ciphertext matrix-vector
multiplication. We implement this matrix-vector multiplication using the
baby-step giant-step (BSGS) method together with one-hoisting, following the
standard approach for homomorphic linear transformations~\cite{MatVecBSGS}.
This reduces the number of ciphertext rotations compared with the naive
diagonal method, while hoisting amortizes the common key-switching
decomposition across multiple rotations of the same ciphertext.

\paragraph{Packing Method}
To evaluate the smaller-batch regime, we instantiate the lookup task using the
GloVe.42B.300d embedding table.
As in the main implementation, each token is represented by four codebook
indices, and the lookup is performed over four corresponding subtables.
Since the embedding layer produces the first input representation for the
subsequent model computation, the output of private lookup should be packed in a
form that can be directly consumed by the downstream encrypted operations.
For this purpose, we use a strided sparse packing layout.

Concretely, each embedding vector has dimension $300$, and we pad it to
$2^{\lceil \log_2 300\rceil}=512$ coordinates.
Since one CKKS ciphertext contains $2^{16}$ slots, this packing layout supports
$2^{16}/512=128$ queried tokens in one ciphertext.
The resulting strided layout is given by
\[
\mathrm{slot}[128j+b]=
\begin{cases}
\mathbf{v}_b[j], & 0\le j<300,\; 0\le b<128,\\
0, & 300\le j<512,\; 0\le b<128.
\end{cases}
\]
Thus, the effective batch size in this smaller-batch setting is $128$.

When $p>512$, a selection vector cannot be represented within the slots allocated
to a single queried token in one ciphertext.
Therefore, two ciphertexts are required to represent the selection vector.
The case $\log p=10$ in our experiment corresponds exactly to this setting.
For our method, after power generation, the additional cost is limited because
the real and imaginary parts can be placed in separate ciphertexts.
In contrast, Kim et al.'s method must apply the same homomorphic operations to
the additional ciphertext as well.
Consequently, when $\log p=10$, the main evaluation cost of Kim et al.'s method
is incurred almost twice, whereas the additional cost of our method remains
relatively small.

\paragraph{Implementation Results}
Table~\ref{tab:small-batch} reports the smaller-batch evaluation results on GloVe.42B.300d. For all parameter settings, our method achieves faster evaluation than Kim et al.'s method when generating an output ciphertext containing $128$ embedding vectors. Specifically, our method reduces the total latency by $2.62\times$, $2.30\times$, and $2.83\times$ for $\log p=6,8,10$, respectively.

The speedup is less pronounced than in the full-batch setting
because Kim et al.'s method applies the same function independently to many entries of the
selection vector, and this element-wise structure benefits directly from CKKS
SIMD parallelism.
Nevertheless, our method still achieves a consistent $2$--$3\times$ improvement
in total latency.
This improvement mainly comes from the lower multiplicative depth of our
evaluation.
When a selection vector is processed within the same ciphertext, both our method
and Kim et al.'s method require $O(\log p)$ non-scalar multiplications
asymptotically.
However, Kim et al.'s method incurs a depth of $\log p+O(1)$, where the additional
constant-depth terms have a non-negligible impact in this parameter range.
In contrast, our method requires only $\log p$ multiplicative depth.
Therefore, when the output ciphertexts are matched at the same output modulus,
our method can start from a smaller initial modulus chain and perform the
evaluation with lighter ciphertext arithmetic.
This explains why the proposed method remains faster even in the smaller-batch
setting, where the SIMD advantage of the baseline is more directly exploited.

\begin{table}[h]
    \centering
    \caption{Smaller-batch evaluation using the same GloVe.42B.300d dataset as
    Table~\ref{tab:compare}. We use the same CKKS
    parameters as in Table~\ref{tab:compare}. Here, batch size denotes the number
    of embedding vectors packed in a single output ciphertext, where each vector
    corresponds to one private token query. For each batch size, we report the
    latency (s) required to generate one such output ciphertext. The linear step,
    denoted by MatVec, is implemented using plaintext-ciphertext matrix-vector
    multiplication rather than PCMM. Each latency is averaged over $10$ independent runs.}
    \label{tab:small-batch}
    {
    \begin{tabular}{ccccccc}
    \toprule
    $\log p$ 
    & Method
    & \multicolumn{1}{c}{\makecell{\sf VecGen}}
    & \multicolumn{1}{c}{\makecell{\sf MatVec}} 
    & \multicolumn{1}{c}{\makecell{\sf Total}}
    & \multicolumn{1}{c}{\makecell{Batch Size}}\\
    \cmidrule(lr){1-1}\cmidrule(lr){2-2}\cmidrule(lr){3-5}\cmidrule(lr){6-6}\\[-3.3ex]
    \cmidrule(lr){1-1}\cmidrule(lr){2-2}\cmidrule(lr){3-5}\cmidrule(lr){6-6}

    \multirow{2}{*}{\makecell{6}}
    & Ours & 6.2334 & 2.8236 & 9.0572 & \multirow{2}{*}{\makecell[c]{128}} \\
    & Kim et al. & 20.7202 & 3.0494 & 23.7695 &\\    
    \cmidrule(lr){1-6}
    
    \multirow{2}{*}{\makecell{8}}
    & Ours & 9.2418 & 5.5929 & 14.8345 & \multirow{2}{*}{\makecell[c]{128}} \\
    & Kim et al. & 28.6495 & 5.4167 & 34.0663 &\\    
    \cmidrule(lr){1-6}
    
    \multirow{2}{*}{\makecell{10}}
    & Ours & 13.5587 & 14.6284 & 28.1873 & \multirow{2}{*}{\makecell[c]{128}} \\
    & Kim et al. & 64.8553 & 14.9204 & 79.7758 &\\    
    \bottomrule
  
    \end{tabular}
    }
\end{table}

\end{document}